\algnewcommand\algorithmicset{\textbf{SET}}
\algnewcommand\algorithmicto{\textbf{TO}}
\algnewcommand\SET[2]{\State\algorithmicset\ #1 \algorithmicto\ #2}
\algnewcommand\algorithmicreceive{\textbf{RECEIVE}}
\algnewcommand\algorithmicfromkeyboard{\textbf{FROM}}
\algnewcommand\RECEIVE[1]{\State\algorithmicreceive\ #1 \algorithmicfromkeyboard}
\algnewcommand\algorithmicsend{\textbf{send}}
\algnewcommand\algorithmictodisplay{\textbf{}}
\algnewcommand\SEND[1]{\State\algorithmicsend\ #1 \algorithmictodisplay}
\journal{ }
\newtheorem{proposition}{Proposition}
\begin{document}
\begin{frontmatter}
		\title{In-Memory Sorting-Searching with Cayley Tree\tnoteref{t1}}
		\tnotetext[t1]{This work is supported by CRG project (File No.: CRG$/2023/006799$) of SERB, Govt. of India.}
		
		\author[1]{Subrata Paul\corref{cor1}~\orcidlink{0000-0001-8919-7193}}
		\ead{psubrata.it@gmail.com}
		
		\author[1]{Sukanta Das~\orcidlink{0000-0001-6110-5082}}
		\ead{sukanta@it.iiests.ac.in}
		
		\author[2]{Biplab~K~Sikdar~\orcidlink{0000-0002-9394-8540}}
		\ead{biplab@cs.iiests.ac.in}
		
		\cortext[cor1]{Corresponding author}
		\affiliation[1]{organization={Department of Information Technology},
			addressline={Indian Institute of Enginnering Science and Technology}, 
			city={Shibpur},
			country={India}}
		\affiliation[2]{organization={Department of Computer Science and Technology},
			addressline={Indian Institute of Enginnering Science and Technology}, 
			city={Shibpur},
			country={India}}
			
	\begin{abstract}
		This work proposes a computing model to reduce the workload of CPU. It relies on the data intensive computation in memory, where the data reside, and effectively realizes an in-memory computing (IMC) platform. Each memory word, with additional logic, acts as a tiny processing element which forms the node of a Cayley tree. The Cayley tree in turn defines the framework for solving the data intensive computational problems. It finds the solutions for {\em in-memory searching}, computing the {\em max (min) in-memory} and {\em in-memory sorting} while reducing the involvement of CPU. The worst case time complexities of the IMC based solutions for {\em in-memory searching} and {\em computing max (min) in-memory} are $\mathcal{O}\log{n}$. Such solutions are independent of the order of elements in the list. The worst case time complexity of {\em in-memory sorting}, on the other hand, is $\mathcal{O}(n\log{n})$. Two types of hardware implementations of the IMC platform are proposed. One is based on the existing/conventional memory architecture, and the other one is on a newly defined memory architecture. The solutions are further implemented in FPGA platform to prove the effectiveness of the IMC architecture while comparing with the state-of-the art designs.
	\end{abstract}
	\begin{keyword}
		In-memory computing, Cayley tree, In-memory searching, Computing max/min in-memory, In-memory sorting, Data intensive computation, Cellular Automata
	\end{keyword}
\end{frontmatter}	

	\section{Introduction}
	
	The in-memory computing (IMC) architecture empowers memory array to perform arithmetic, logical and other operations formally done by the CPU~\cite{sebastian2020memory}. In conventional von Neumann architecture, the data transfer from memory to CPU and vice versa is frequent~\cite{hashemi2016continuous}. On the other hand, an IMC architecture avoids such data transfer, resulting speedy computation~\cite{mutlu2019processing}. This has sparked significant research interests among the researchers, working in the field of computer architecture, to explore designs that can carry out computations inside the memory cells~\cite{yang2022novel}. The memory of an ideal IMC system preferably be the non-volatile, multi-bit, with extended cycle durability, and ease of read/write operations~\cite{sebastian2020memory,wong2015memory}. A number of IMC platforms around the static RAM (SRAM) and dynamic RAM (DRAM) have also been proposed in the literature~\cite{agrawal2018,ali2019}. 
	
	The State-of-the-art CMOS technology features unmanageable device leakage current. Hence, the power wall and memory wall bottleneck in modern computing platform may not be fully resolved~\cite{sandhie2021investigation}. One of the most promising technologies is the spintronic-based magnetic random access memory (MRAM) which offers advantages such as non-volatility, data retention, durability during reprogramming, etc~\cite{li2022survey}. As the potential IMC platform, nonvolatile memories (NVMs) have drawn a lot of interest~\cite{huang2016reconfigurable,jain2017computing,kang2017memory,le2018compressed,zhang2019spintronic,shreya2020computing,wang2020design}. The MRAM can be a choice for IMC platform~\cite{li2022survey}. With the NVMs, the energy-efficient computing may be achieved by adopting some changes in the peripheral circuitry. This effectively leads to huge leakage power reduction. 
	
	The IMC, however, is restricted to a few basic operations as only a few logic functions can be implemented in the current IMC systems. The main difficulty with the IMC is that it cannot be implemented without compromising the standard memory performance, such as the read/write speed and storage density. In this scenario, this work proposes an IMC architecture to satisfy the need of data intensive computation. Effectiveness of the architecture is verified while realizing the {\em in-memory sorting}, {\em in-memory searching} and {\em computing max/min in-memory}. The words in a memory module are organized to form the nodes of a Cayley tree and that realizes the IMC.
	
	In general, a Cayley tree is an infinite tree, each node of which has a fixed degree. However, this work considers a finite Cayley tree, where all but terminating nodes have the same number of connected neighbors~\cite{cayley1878desiderata} and all the leaves maintain the same distance from the {\em root}. The Cayley tree formed with the memory word, with a set of added flags, is introduced in this paper to realize the three schemes -- {\em in-memory searching}, {\em computing max (min) in-memory} and {\em in-memory sorting}. The proposed IMC platform is presented in Section~\ref{the-model}, where the Cayley tree structure are defined following a brief on related works in Section~\ref{related-work}. The {\em in-memory searching} scheme, developed around the Cayley tree structure, is detailed out in Section~\ref{case1}. The {\em computing max (min) in-memory} is introduced in Section~\ref{sec-max}. Section~\ref{sec-sorting} elaborates the {\em in-memory sorting} scheme. The field-programmable gate array (FPGA) implementation of the IMC platforms are reported in Section~\ref{hardware}. Section~\ref{conclusion} concludes the work.
	
	\section{Related Works}\label{related-work}
	 The Computing-In-Memory (CIM) paradigm that emerged in the post-Moore era is a deviation from the conventional von Neumann architecture~\cite{godfrey1993introduction}. The CIM performs part of the operations within or near to the memory arrays that conventionally executed in the CPU. It reduces the expenditure of data movement between the CPU and memory~\cite{asifuzzaman2023survey}. 
	 
	 Depending on the proximity of computation, near to memory as well as the responsibility of memory in computation, the CIM is categorized as the near-memory processing (NMP), processing-in-memory (PIM) and in-memory computing (IMC)~\cite{bao2022toward,sun2023survey}. The PIM architecture has been proposed in~\cite{wang2015general,nai2017graphpim,guo2021spintronics} as a solution to the von Neumann bottleneck of huge data movement between the CPU and memory. It fits well with the newly developed non-volatile spin-transfer torque magnetic RAM (STT MRAM) that has several benefits including the fast read/write and low energy consumption~\cite{jain2017computing,kang2017memory}. 
	 A proposal for triangle counting acceleration with processing-in-mram (TCIM) can be found in~\cite{wang2020tcim}. The TCIM performs bit-wise logical operations.
	 
	 In IMC, the operations that are conventionally executed within the CPU are executed in memory arrays. A number of designs have been proposed around the volatile and non-volatile memory, realizing the IMC~\cite{jain2017computing,sun2023survey}. However, the most popular designs are based on the memristor-based material implication (IMPLY)~\cite{borghetti2010memristive,kvatinsky2013memristor} and Memristor aided logic (MAGIC)~\cite{kvatinsky2014magic}. The IMPLY has been demonstrated to be useful for logic synthesis in~\cite{borghetti2010memristive,shirinzadeh2017logic}. The memristive IMC can be realized in the memristive crossbar array~\cite{batcher1968sorting,peters2011fast,peters2012novel}. The work of~\cite{batcher1968sorting} reports realization of bitonic sorting in memristive crossbar arrays. Several techniques for in-memory graph processing have been also proposed in the IMC platform~\cite{shim2022gp3d,chen2022accelerating,wei2023imga}. However, the IMC around the memristor is yet to be introduced as the main stream technology, replacing the CMOS.

	\vspace{1 em}

	\section{The In-memory Computing Platform}\label{the-model}
	The in-memory computing (IMC) platform, proposed in this work, is based on Cayley tree. Classically, a Cayley tree of order $\eta\geq 1$ is an infinite tree in which each non-leaf node has a constant number of branches. 
	
	In this work, we introduce finite Cayley tree as an IMC platform. All the leaf nodes of the tree maintain the same distance from the root. The distance of a leaf node from the {\em root} is called the height ($h$) of the tree. Each non-leaf (except the root) node has $\eta$ children. Hence, all such nodes have same degree, i.e., $\eta+1$. For example, a Cayley tree of height $3$ is shown in Figure~\ref{tree-cell}(a), where the order of the tree $\eta$ is $2$. The red (lightly shaded) node is the {\em root} and the nodes at its next level (say level $1$) are the children of the root. The intermediate nodes --that is, other than the {\em root} and leaves, have two children (left child and right child) --that is, the degree of each node is $3$.
	\vspace{1 em}
	\begin{proposition}\label{th1}
		In a (finite) Cayley tree of order $\eta$ and height $h\geq 1$, the number of nodes ($n$) is	
			\begin{equation}\label{eq1}
			n = \begin{cases}
				1  &\text{if } h = 1 \\
				1 + (\eta+1)\sum_{i=0}^{h-2} \eta^{i}&\text{otherwise}
			\end{cases}\\
		\end{equation}
	\end{proposition}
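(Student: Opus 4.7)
The plan is to count the nodes of the tree layer by layer and then collect the resulting geometric-type sum. First I would stratify the vertex set into levels $L_{1}, L_{2}, \ldots, L_{h}$, placing the root alone at level $L_{1}$ and the leaves at level $L_{h}$, so that height $h=1$ immediately corresponds to the degenerate single-vertex tree and the first branch of the formula is trivial.

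For $h \geq 2$ the next step is to determine the cardinality of each level. Since every non-leaf vertex of the tree shares the common degree $\eta+1$ and the root is the only non-leaf vertex without a parent, the root must contribute $\eta+1$ children, so $|L_{2}| = \eta+1$. For $3 \leq k \leq h$, every vertex in $L_{k-1}$ is a non-root non-leaf and therefore has exactly $\eta$ children lying in $L_{k}$, yielding the recurrence $|L_{k}| = \eta\,|L_{k-1}|$. A short induction on $k$ then gives the closed form $|L_{k}| = (\eta+1)\eta^{k-2}$ for $2 \leq k \leq h$.

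Finally I would sum $|L_{1}|+|L_{2}|+\cdots+|L_{h}|$ and reindex $k-2 \mapsto i$ to recover the stated expression $n = 1 + (\eta+1)\sum_{i=0}^{h-2}\eta^{i}$.

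I do not anticipate any genuine technical obstacle. The only points requiring care are the bookkeeping convention that the height $h$ is interpreted as the number of levels (so that $h=1$ means the root-only tree) and the observation that, although the root has no parent, it must still spawn $\eta+1$ children in order for the uniform non-leaf degree condition to be met. An induction directly on $h$, adjoining a fresh layer of $(\eta+1)\eta^{h-1}$ leaves to the previous tree at each step, would be an equally clean alternative route.
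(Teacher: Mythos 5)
Your proposal is correct and follows essentially the same route as the paper: the paper also counts the tree level by level (the root contributing $1$, its $\eta+1$ children contributing the next layer, and each subsequent layer multiplying by $\eta$) and then sums the resulting geometric-type series to obtain $n = 1 + (\eta+1)\sum_{i=0}^{h-2}\eta^{i}$. Your level-cardinality recurrence $|L_k| = \eta\,|L_{k-1}|$ is just a slightly more explicit packaging of the paper's growth table.
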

	\begin{proof}
		Let consider a Cayley tree of order $\eta$. That is, the {\em root} has $(\eta+1)$ branches and an intermediate node has $\eta$ children and one parent. A leaf, on the other hand, is having only the parent. Now, the growth of the number of nodes $n$ in the Cayley tree with height $h$ is as following
		\begin{align*}
			\text{Height}&&\text{Number of nodes ($n$)}\\
			1&&1\\
			2&&1+(\eta+1)\\
			3&&1+(\eta+1)+(\eta+1)\eta\\
			\vdots&&\vdots\\
			h&&1+(\eta+1)+(\eta+1)\eta+(\eta+1)\eta^2+\cdots+(\eta+1)\eta^{h-2}\\
		\end{align*}
		Hence, the number of nodes $n=1+ (\eta+1)\sum_{i=0}^{h-2}\eta^i$
%
		
	\end{proof}
	

	\begin{figure}[ht]
		\begin{center}
			\scalebox{1}{
				\begin{tabular}{cc}
					\includegraphics[width=0.4\columnwidth]{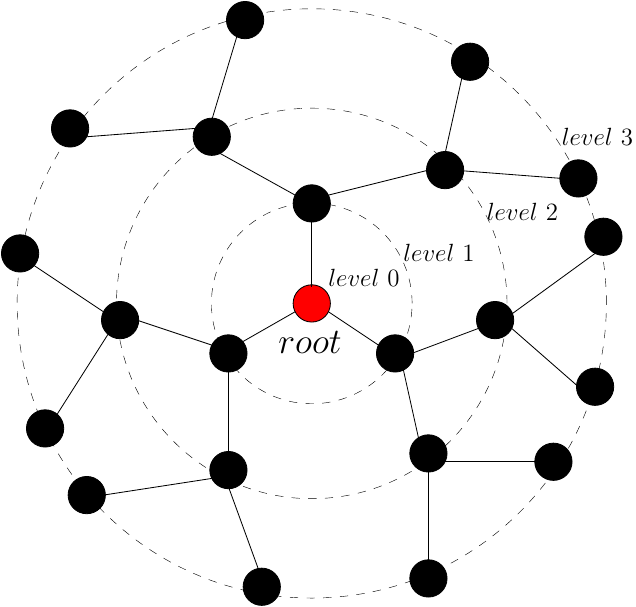} 
					&\includegraphics[width=0.5\columnwidth]{ 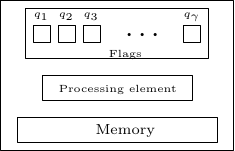}\\
					a.& b.\\				
			\end{tabular}}
			\caption{ Cayley tree; (a) Tree of height $4$, (b) A node of tree}
			\label{tree-cell}
		\end{center}
	\end{figure}
	\noindent The Cayley tree of Figure~\ref{tree-cell}(a) is of $h=4$, $\eta=2$ --that is, $n=22$. While realizing {\em in-memory searching}, {\em computing max (min) in-memory} or {\em in-memory sorting}, the $n$ is to be greater than the number of elements in the input list. For an appropriate $n$, the $h$ is to be properly chosen when the order $\eta$ is given. 
	
	Each node of a Cayley tree contains a tiny processing unit --that is, a memory word and a set of flags. In Figure~\ref{tree-cell}(b), $q_1,~q_2,~q_3,\cdots,q_\gamma$ are such $\gamma$ flags. 
	For the computation at a node, the node communicates with its adjacent (neighbor) nodes. It follows two internal functions $f$ and $f_{in}$. The $f:\{0,1\}^{\eta+\gamma+2}\rightarrow\{0,1\}$ updates one flag (through which the signals and memory bits are sent to designated neighbors) and the $f_{in}:\{0,1\}^{\gamma+1}\rightarrow\{0,1\}^{\gamma}$ updates some other flags and the memory word. Two procedures (send and receive) are defined to explore the communication protocol of a node. The {\em send} procedure sends a bit to its neighbors whereas, the {\em receive} procedure receives bits from its neighbor and perform internal operations ($f$ and $f_{in}$) to update the flags and memory word. For the proposed in-memory computing schemes, the elements of a list initially distributed among the nodes of the tree (except the {\em root}). 

For the searching, the key is in the {\em root} of the tree. In addition to an element of the list, each node accesses three flags -- {\em state}, {\em start}, and {\em match}. To match the key, the signals flow from {\em root} to leaves through the intermediate nodes. Each child node in turn sends a signal to its parent after receiving the signal ({\em match}) from its children. The searching decision, finally, is taken by considering only the flags of the {\em root} once participation of all the nodes is completed.

In case of {\em computing max (min) in-memory}, $\eta+4$ flags are required. These are the {\em state}, {\em start} and $\eta+2$ {\em link} flags. Each link flag represents whether the connection of a node is enabled or disabled with its neighbor (memory word). During execution, the signals flow from leaves to the {\em root} to update the flags and memory word of a node. That is, each node receives bits from its children, compares these with the {\em MSB} of its memory word to update its next {\em state}. The {\em state} is then sent to its parent. The {\em root} receives the bits from its children and updates its flags as well as the {\em MSB} of the memory word. After every such iteration, the memory word of a node is shifted left (\verb*|Circular-Left-Shift|).

The {\em in-memory sorting} scheme consists of a set of cycles of operations. Each cycle includes the {\em in-memory searching} and {\em computing max in-memory}. After every computation of max, the {\em root} holds the maximum value and then reports --that is, stored in different place. An {\em in-memory searching} is then perform to identify the nodes ({\em non-root} nodes) that contain the maximum value. Such nodes are disabled so than these cannot participate in the next iteration of {\em computing max in-memory}. This process (computing max/report/searching/disabling) repeats until all the nodes are disabled. As, at every cycle a value (maximum value) is reported, the sorted elements are obtained in descending order. 

\section{In-Memory Searching}\label{case1}
This section details out the {\em in-memory searching} scheme in the IMC platform, defined in the earlier section. For searching, the {\em root} node is loaded with the key. It initiates the execution by sending a bit $b$ (say, $1$) of the key to its children. An intermediate node that receives $b$ from its parent updates its flags as per \verb*|receive-search| operation (Procedure~\ref{alg:rcv}). It then forwards $b$ as per \verb*|send-search| operation (Procedure~\ref{alg:send}) to its children. A leaf performs similar task except sending of $b$ to its children. The internal structure of an IMC node with flags, memory word, procedures \verb*|send-search| and \verb*|receive-search|, $f$ and $f_{in}$, and $state_0$ to $state_\eta$ ($state_0$, $state_1,\cdots,state_\eta$ are the states of its neighbors) for the search, is shown in Figure~\ref{cellstruct}. 

	The {\em in-memory searching} operation iterates with two phases, Phase $1$ and Phase $2$, in a node. The task of Phase $1$ and Phase $2$, in the {\em root} and in the intermediate nodes, are summarized below.
	\begin{itemize}
		\item [A.] The root, in Phase $1$, first sends {\em initiate} signal to its children and then the memory bits (key) one by one. In Phase $2$, the {\em root} receives {\em match} signals/bits from its children.
		
		\item [B.] An intermediate node $v$, in Phase $1$, receives {\em initiate} signal from its parent. It then updates its own flags ({\em state}, {\em start} and {\em match}) depending on the memory bit (key bit) received from the parent. After each update $f$ and $f_{in}$, the node $v$ forwards the {\em initiate} signal along with its {\em state} and the key bit to its children. In Phase $2$, for a match with the bit of element (stored in $v$) and the corresponding key bit, the node $v$ sends a {\em match} signal to its parent. If any {\em match} signal is received from its children, the node $v$ forwards the signal to its parent in the next time step. Since a leaf node has no children, such a node can communicate with its parent only.
		
		At the end of the {\em in-memory searching}, the search result can be obtained by sensing the state flag of the root only. If the state flag of the root contains $1$ --this implies, the key is found, the key is not found otherwise.
	\end{itemize}
	
		\begin{figure}[ht]
		\begin{center}
			\scalebox{1}{
				\begin{tabular}{c}
					\includegraphics[width=1\textwidth]{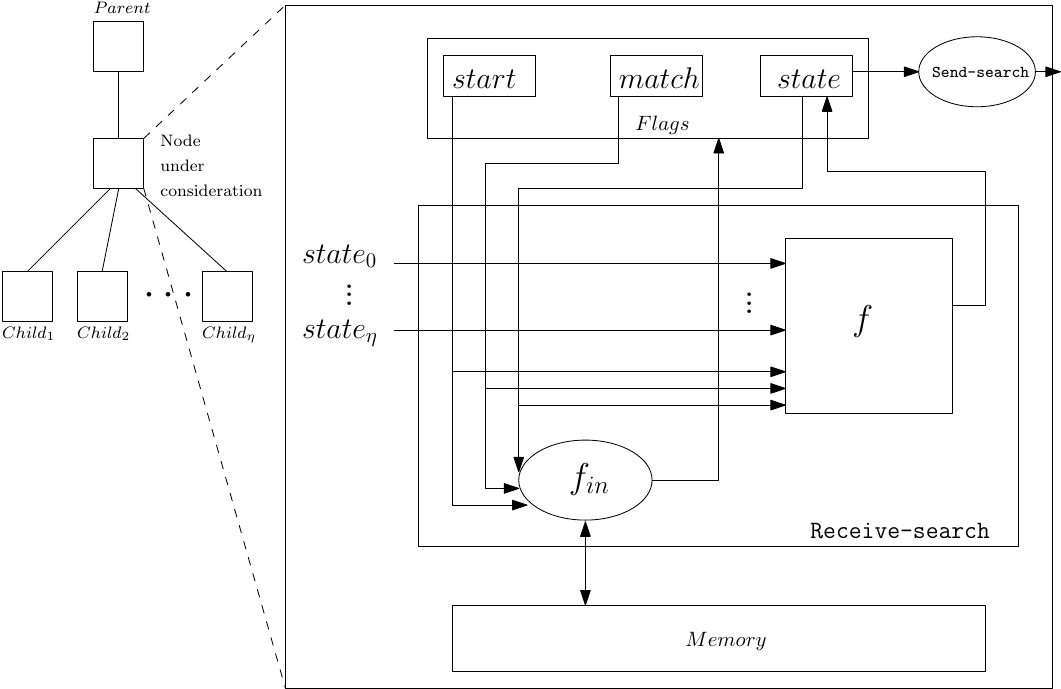}
			\end{tabular}}
			\caption{Internal structure of an IMC node for {\em in-memory searching}}
			\label{cellstruct}
		\end{center}
	\end{figure}
	
	 Procedure~\ref{alg:rcv} and Procedure~\ref{alg:send} are designed to realize the \verb*|receive-search| and \verb*|send-search| operations of each node respectively. Here, in the procedures,
	\begin{itemize}
		\item $\mathscr{B}$ is the memory, and $\mathscr{B}_j[k]$ represents the $k^{th}$ bit of the memory at node $j$.
		\item The $state_j$, $start_j$ and $match_j$ are the three flags of node $j$.	
	\end{itemize}

	\begin{algorithm}[tbh]	
		\begin{algorithmic}
			\fontsize{7.5}{6.5}\selectfont
				\Require A word $\mathscr{B}_j$.
				\State(Local Variable)
				\State $clock_j\gets 0;~ state_j\gets 0;~ start_j\gets 0;~ match_j\gets 1;$
				\State \textbf{Receive: $j$ receives one bit $(state_i)$ from $i$:}
				\State $\mathcal{L}\gets length(\mathscr{B}_{j})$
				\If{$j=root$}
				\If{$clock_j>\mathcal{L}$}
				\If{$state_j=1$}
				\State $state_j\gets state_j$;
				\Else
				\ForAll{$d\in state_i.j$}
				\State $state_j\gets state_j\vee state_d$;
				\EndFor
				\EndIf 	
				\Else
				\State \textbf{wait untill} a bit is received.
				\EndIf
				\Else
				\If{$clock_j\leq\mathcal{L}$}
				\State $state_j\gets state_i$;
				\If{$start_j=0$}
				\If{$state_j=1$}
				\State $start_j\gets 1$;
				\Else
				\State $start_j\gets 0$;
				\EndIf
				
				\Else
				\If{$match_j=1$}
				\If{$state_j=\mathscr{B}_j[clock_j-1]$}
				\State $match_j\gets 1$;
				\Else
				\State $match_j\gets 0$;
				\EndIf
				\Else
				\State $match_j\gets match_j$;
				\EndIf
				
				\EndIf
				\Else
				\State $NS_j\gets 0$;
				\ForAll{$d\in state_i.j$}
				\State $NS_j\gets NS_j\vee state_d$;
				\EndFor
				\State $state_j\gets NS\vee match_j$;
				\If{$match_j=1$}
				\State $match_j\gets 0$;
				\EndIf
				
				\EndIf
				
				\EndIf
				
			
		\end{algorithmic}
		
		\caption{\begin{scriptsize}receive-search in searching of a node $j,~1\leq j\leq n$\end{scriptsize}}\label{alg:rcv}
	\end{algorithm}
	
	\begin{algorithm}[ht]
		\begin{scriptsize}
			\begin{algorithmic}		
				\Require A word $\mathscr{B}_j$.
				\State(Local Variable)
				\State $clock_j\gets 0;~ state_j\gets 0;~ start_j\gets 0;~ match_j\gets 1;$
				\State \textbf{Send: $j$ sends one bit $(state_j)$ to $k$:}
				\State $\mathcal{L}\gets length(\mathscr{B}_{j})$
				\If{$j=root$}
				\If{$clock_j=0$}				
				\State$state_j\gets 1$;
				\ForAll{$d\in state_j.k$}
				\SEND $(state_j)$ to $d$;
				\EndFor
				\ElsIf{$clock_j\leq\mathcal{L}$}
				\State $state_j\gets \mathscr{B}_j[clock_j-1]$;
				\ForAll{$d\in state_j.k$}
				\SEND $(state_j)$ to $d$;
				\EndFor
				\Else
				\State \textbf{wait for} receive bit from $k$;
				\EndIf
				\Else
				\If{$clock_j>\mathcal{L}$}
				\SEND $(state_j)$ to $k$;
				\Else
				\ForAll{$d\in state_j.k$}
				\SEND $(state_j)$ to $d$;
				\EndFor
				\EndIf	
				\EndIf	
				\State$clock_j \gets clock_j+1$;	
				
			\end{algorithmic}
		\end{scriptsize}
		\caption{\begin{scriptsize}
				send-search in searching of a node $j,~1\leq j\leq n$
		\end{scriptsize}}\label{alg:send}
	\end{algorithm}
	
	\noindent Figures~\ref{Phase1A} to~\ref{Phase2A} explain the {\em in-memory searching}. Here, order of the tree $\eta=2$ -that is, a node has three adjacent nodes, and 
	nine $4$-bit elements are in the list. The elements are distributed over the $9$ nodes (except the root). The elements are $14$, $9$, $6$, $10$, $14$, $7$, $11$, $11$ and $10$. The key $9$ is loaded in memory of the root. Initially, all the flags in {\em root} are set to $1$ and for the remaining nodes, the \emph{state}, \emph{start} and \emph{match} are set to $0$, $0$ and $1$ respectively (Figure~\ref{Phase1A}(a)). The Phase 1 and Phase 2 of the search operation are next illustrated.

	\noindent\textbf{\underline {Phase 1}}
	
	Figures~\ref{Phase1A}(a)-\ref{Phase1A}(c) are to describe the initial iterations of Phase $1$, where each node sends its {\em state} (marked as blue arrow) to its children. After the {\em initiate} signal, the {\em state} flag of the root assumes the value of {\em MSB} of the key, which is sent to the children of the root. Then, the next bit of the key is assigned to the {\em state} flag, and is sent to the children. This process repeats until all the bits of the key are sent.

	\begin{figure}[hbt!]
		\begin{center}
			\scalebox{1}{
				\begin{tabular}{ccc}
					\includegraphics[width=0.45\textwidth]{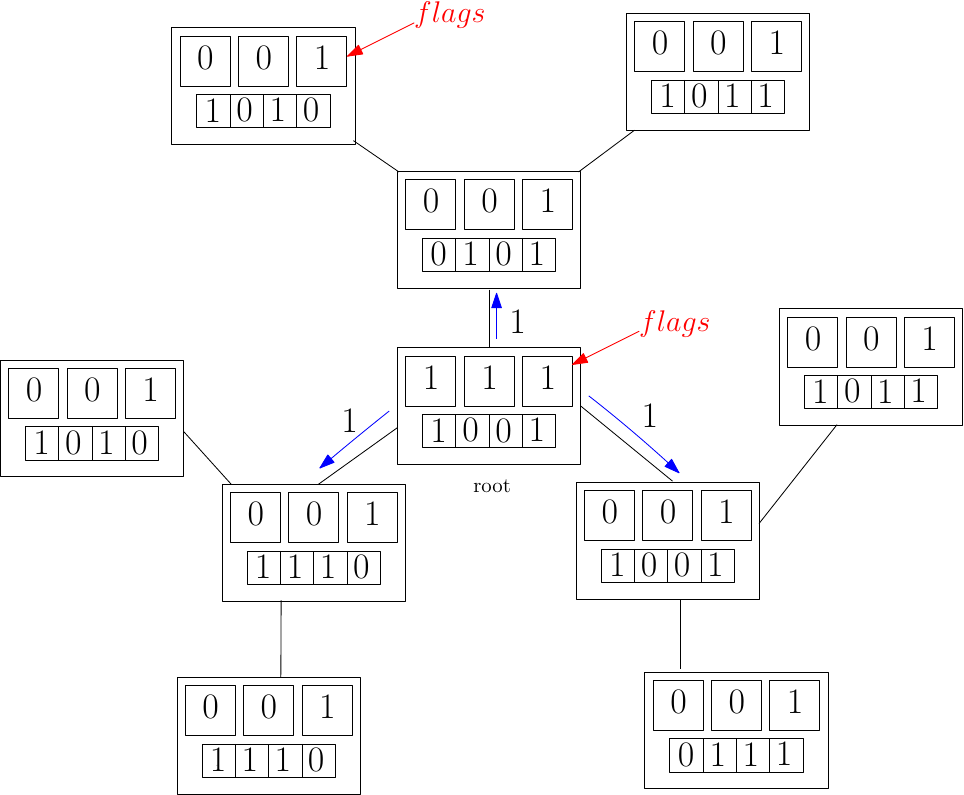}&&
					\includegraphics[width=0.45\textwidth]{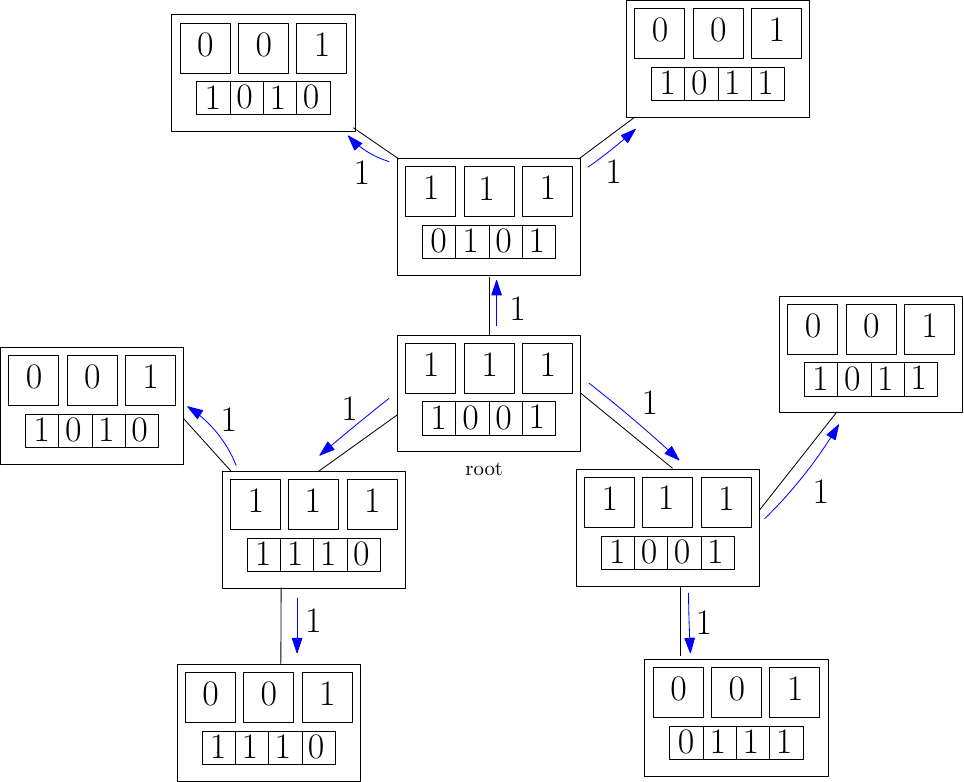}\\
					a.&&b.\\
			\end{tabular}}
			
			\scalebox{1}{
				\begin{tabular}{ccc}
					&\includegraphics[width=0.45\textwidth]{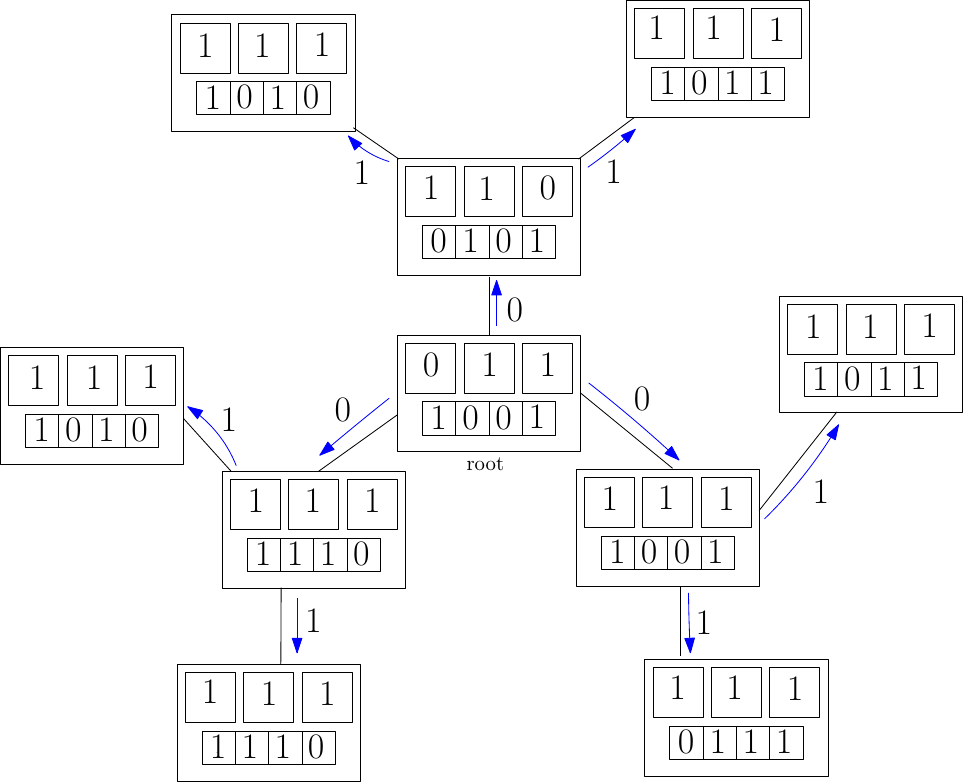}&\\
					&c.&\\
			\end{tabular}}
			\caption{Three consecutive steps of Phase $1$ from the initial configuration}
			\label{Phase1A}
		\end{center}
	\end{figure}
	
	During Phase~$1$, each intermediate node receives an {\em initiate} signal from its parent, and its {\em start} flag is set to $1$. A node then sends the {\em initiate} signal to its children. Next, the node receives a memory-bit from its parent. If the bit is similar to the MSB of its memory, its {\em match} flag is set to $1$ otherwise, the {\em match} flag is set to $0$. During this phase, if {\em match} flag is set to $0$ at any iteration, it sticks to $0$ for the rest of the iterations (Figure~\ref{Phase1A}(b) and \ref{Phase1A}(c)).
	
	Once the match flag is set (set to $0$ or $1$), the received bit is then sent to its children (see Figure~\ref{Phase1B}). After that, the node receives the next memory bit from its parent and compares it with the second {\em MSB} of its memory word and so on. These received bits are sent to its children. That is, through this process, an intermediate node receives the whole key and compares it with the element stored in its memory. 
	
	\begin{figure}[hbt!]
		\begin{center}
			\scalebox{1}{
				\begin{tabular}{ccc}
					\includegraphics[width=0.45\textwidth]{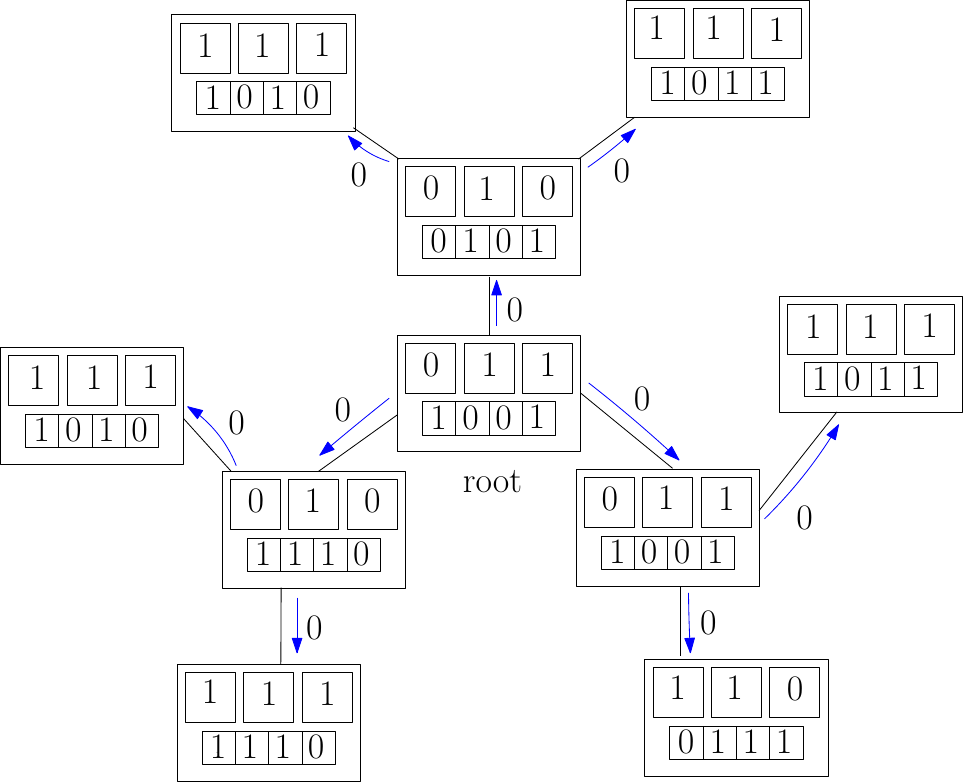}&&
					\includegraphics[width=0.45\textwidth]{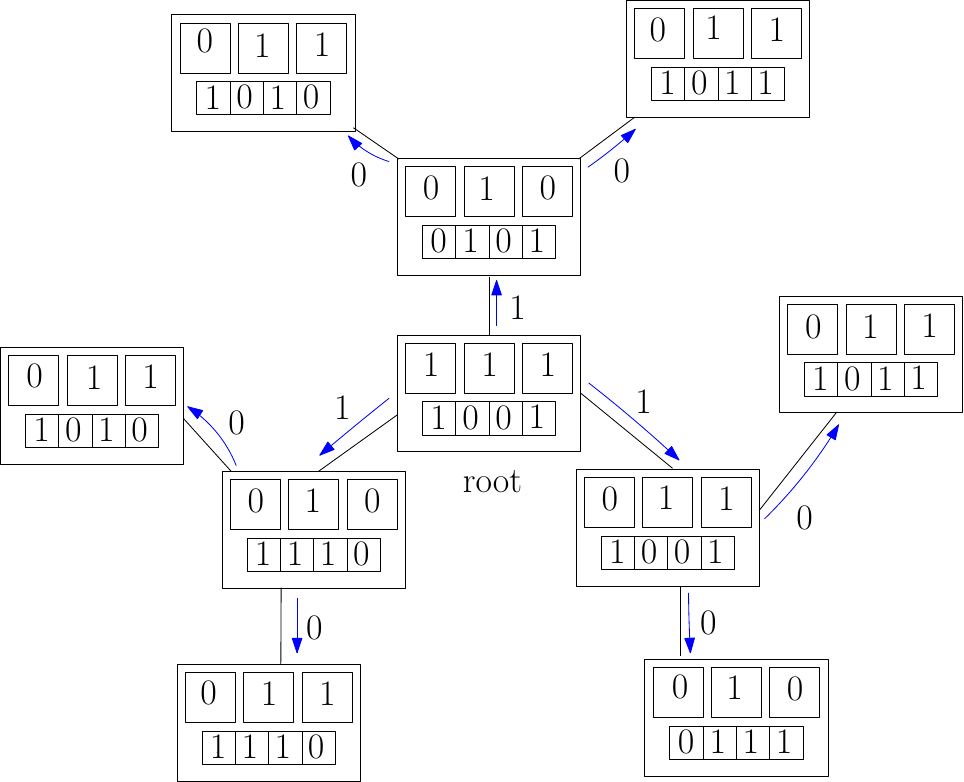}\\
					a.&&b.\\
			\end{tabular}}
			
			\scalebox{1}{
				\begin{tabular}{ccc}
					&\includegraphics[width=0.45\textwidth]{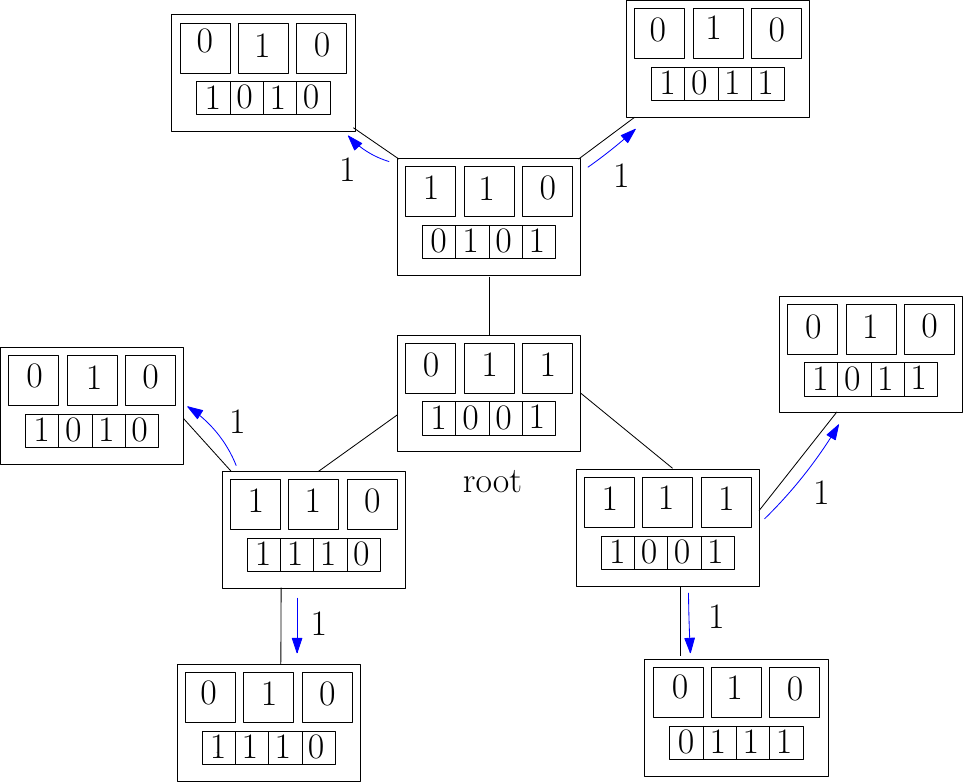}&\\
					&c.&\\
			\end{tabular}}
			
			\caption{Steps of Phase $1$ of intermediate nodes}
			\label{Phase1B}
		\end{center}
	\end{figure}

\noindent\textbf{\underline {Phase 2}} 

In Phase~$2$, at first the {\em state} flag of each intermediate node is replaced by the {\em match} flag. Then the {\em state} flag is sent to its parent. 
In Figure~\ref{Phase2B}, the {\em match} flag of each node is assigned to its {\em state} flag and then the {\em match} flag is set to $0$ (Figure~\ref{Phase2B}(b) and Figure~\ref{Phase2B}(c)). A leaf node, on the other hand, first switches its {\em state} according to the {\em match} flag and then sends the {\em state} flag to its parent (Figure~\ref{Phase2B}(a) -- \ref{Phase2B}(c)).

\begin{figure}[hbt!]
	\begin{center}
		\scalebox{1}{
			\begin{tabular}{ccc}
				\includegraphics[width=0.45\textwidth]{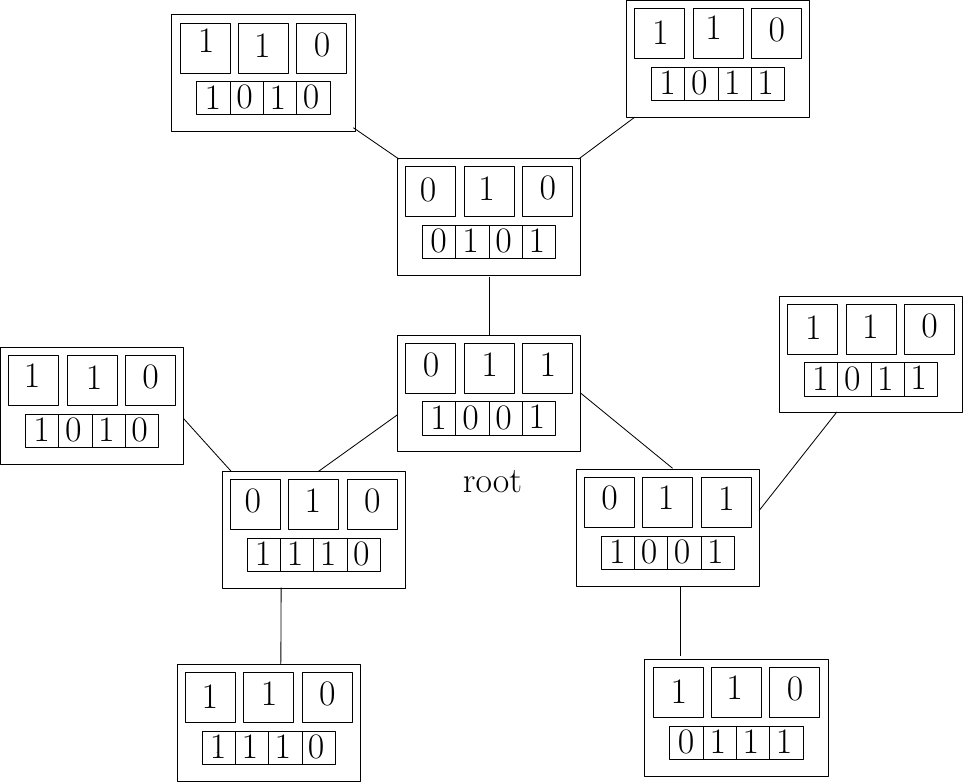}&&
				\includegraphics[width=0.45\textwidth]{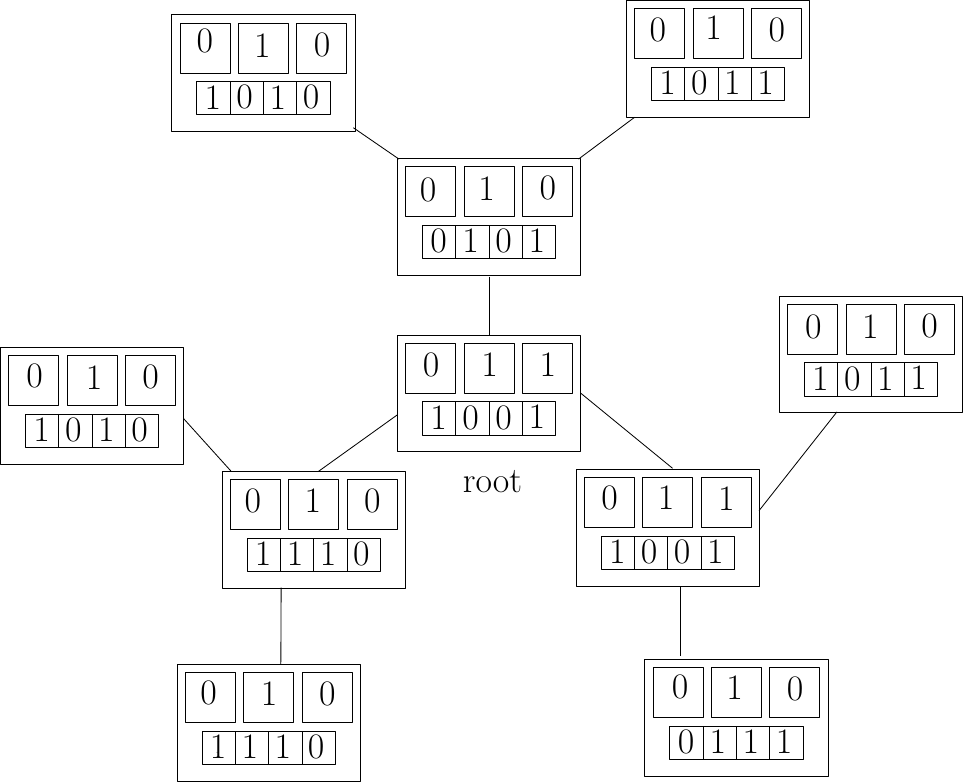}\\
				a.&&b.\\
		\end{tabular}}
		
		\scalebox{1}{
			\begin{tabular}{ccc}
				&\includegraphics[width=0.45\textwidth]{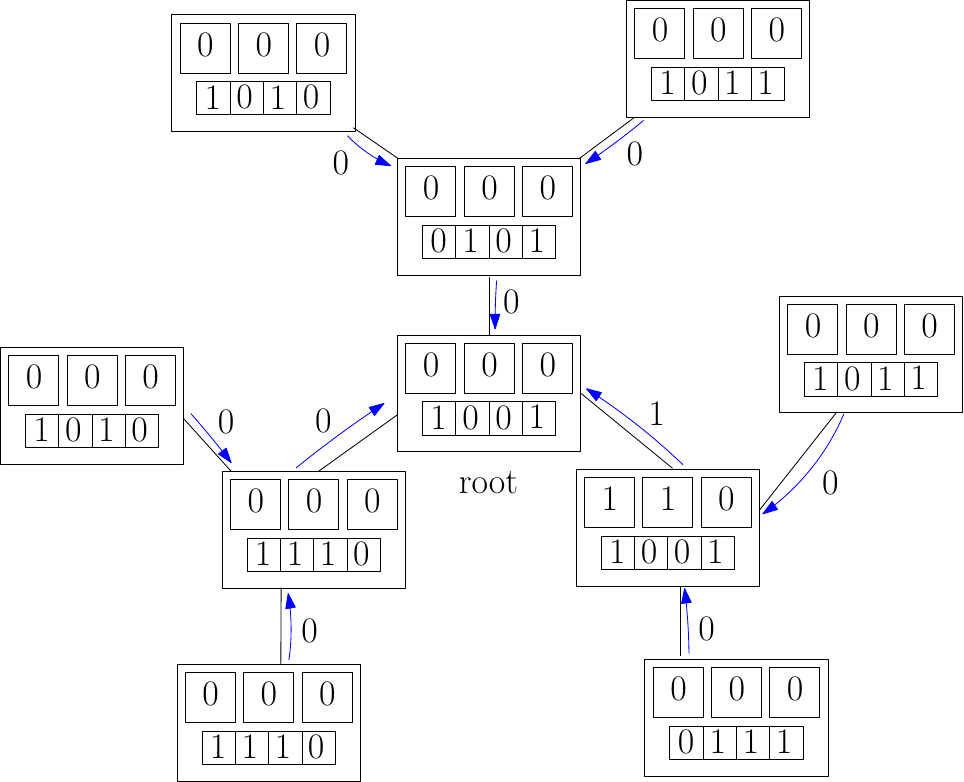}&\\
				&c.&\\
		\end{tabular}}
		\caption{Phase $2$ -- starts from b}
		\label{Phase2B}
	\end{center}
\end{figure}


Figure~\ref{Phase2A}(a) shows that the {\em root} node starts receiving the state bits from its children. The {\em state} flag of the {\em root} is switched to $1$ in Figure~\ref{Phase2A}(b) because it receives $1$ from one of its children. Once the {\em state} of the {\em root} is switched to $1$, it sticks to $1$ (Figure~\ref{Phase2A}(a) --~\ref{Phase2A}(c)). At the end of execution, since the state of the root contains 1 in Figure~\ref{Phase2A}(c), the key is found. 

\begin{figure}[hbt!]
	\begin{center}
		\scalebox{1}{
			\begin{tabular}{ccc}
				\includegraphics[width=0.45\textwidth]{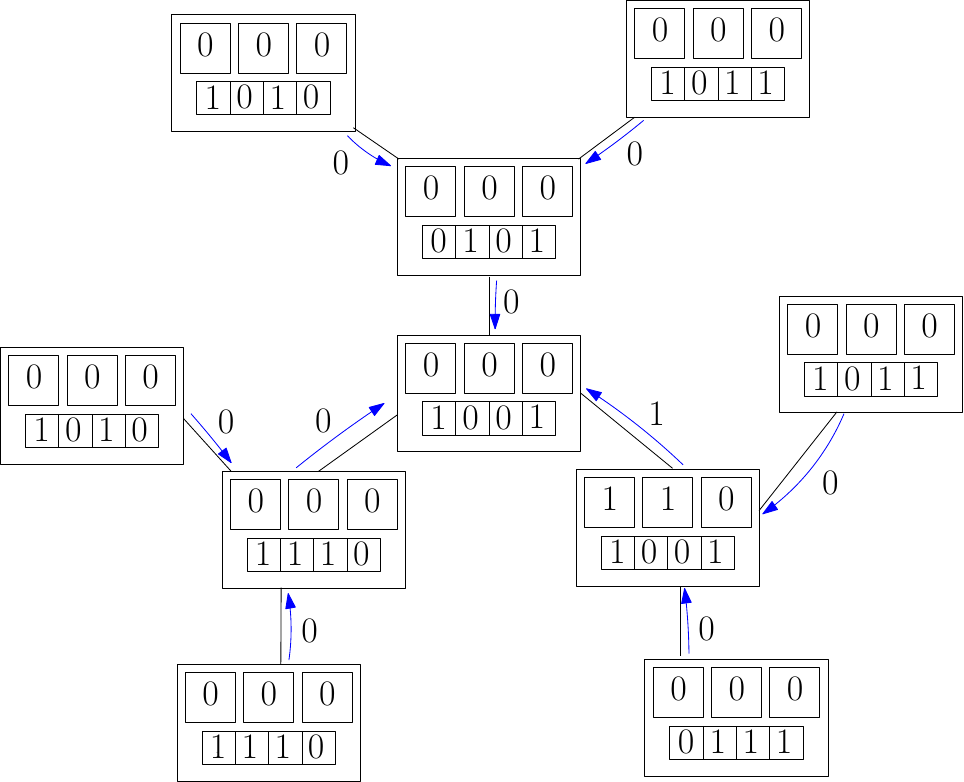}&&
				\includegraphics[width=0.45\textwidth]{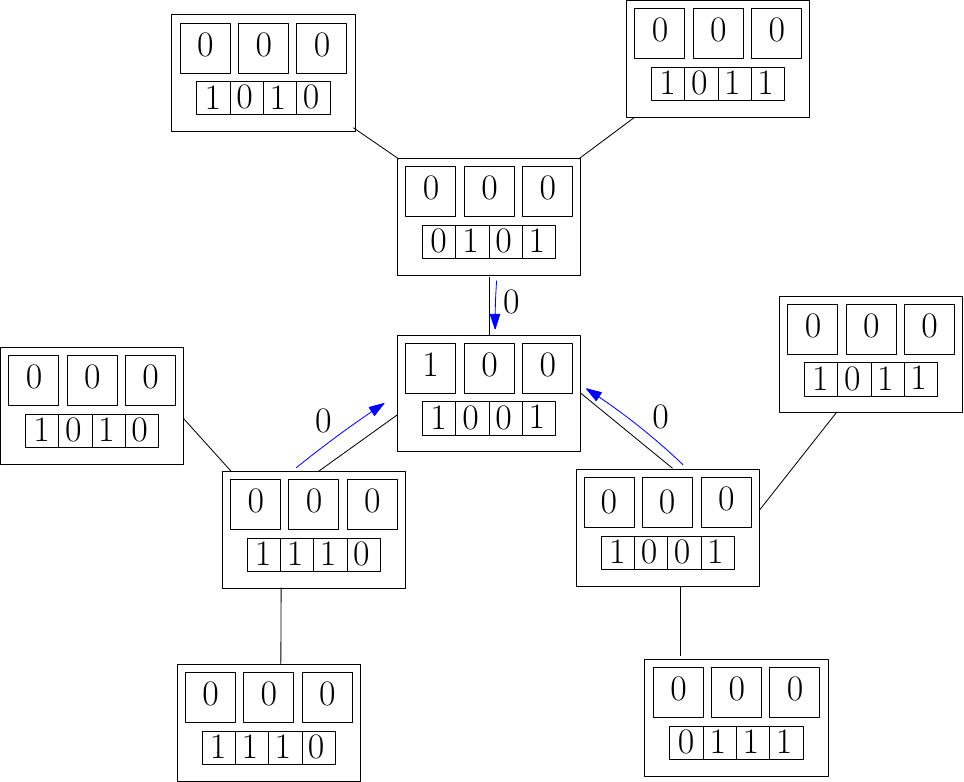}\\
				a.&&b.\\
		\end{tabular}}
		
		\scalebox{1}{
			\begin{tabular}{ccc}
				&\includegraphics[width=0.45\textwidth]{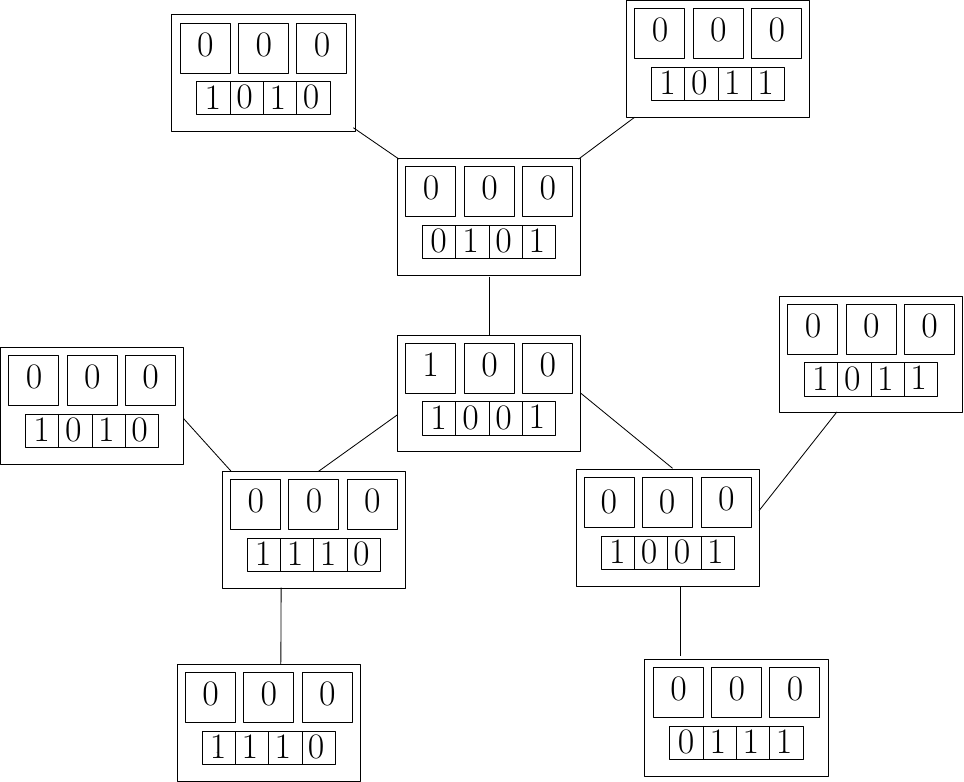}&\\
				&c.&\\
		\end{tabular}}
				%
		\caption{Phase $2$ -- the search terminates at c}
		\label{Phase2A}
	\end{center}
\end{figure}

\vspace{1 em}
\begin{proposition}
	The proposed in-memory searching scheme terminates in finite time for any finite input list. 
\end{proposition}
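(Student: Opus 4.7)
The plan is to establish termination by bounding the number of steps in Phase~$1$ and Phase~$2$ separately and then summing. Since the input list is finite of some size $m$, we first invoke Proposition~\ref{th1}: an order-$\eta$ Cayley tree of sufficient height $h$ satisfies $n - 1 \geq m$, so the list fits into the non-root nodes and both $h$ and $n$ are finite. The memory-word length $\mathcal{L}$ is likewise finite. The entire argument will reduce to showing that the propagation of bits along the finite tree, governed by the $clock_j$ variable in Procedures~\ref{alg:rcv} and~\ref{alg:send}, cannot run indefinitely.

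Next I would argue that Phase~$1$ terminates in at most $\mathcal{L} + h - 1$ global steps. The root's \texttt{send-search} procedure emits exactly one \emph{initiate} bit followed by the $\mathcal{L}$ key bits (one per $clock_j$ tick with $clock_j \leq \mathcal{L}$), after which it ceases transmission and switches to waiting. By induction on the depth $d$ of a node $v$, the sequence of $\mathcal{L}+1$ bits emitted by the root reaches $v$ during the time interval $[d, d+\mathcal{L}]$, since each intermediate node forwards whatever it receives to its children on the next clock tick. By depth $h-1$ (the leaves), the last key bit has been processed by step $\mathcal{L}+h-1$, at which point every non-root node has finalised its \emph{start}, \emph{match}, and \emph{state} flags for Phase~$1$.

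Then I would handle Phase~$2$ as a bottom-up wave of at most $h-1$ additional steps. Once a node $v$ satisfies $clock_v > \mathcal{L}$, its branch of \texttt{receive-search} computes a single disjunction of its children's \emph{state} values with its own \emph{match}; a leaf, having no children, simply forwards its \emph{state} upward. By reverse induction on depth, the root obtains the final state bits from all of its children within $h-1$ ticks after Phase~$1$ ends, and at that point its own \emph{state} is set once and for all (the conditional in Procedure~\ref{alg:rcv} freezes $state_j$ as soon as it becomes $1$, and otherwise overwrites it exactly once by the OR of the children). Therefore the whole scheme halts within $\mathcal{L} + 2(h-1)$ steps, a finite bound depending only on the list size and word length.

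The main obstacle I anticipate is not any deep combinatorial fact but the bookkeeping needed to align the local counter $clock_j$ at each node with the global step index, and to justify rigorously that the Phase~$1$-to-Phase~$2$ transition at every node is triggered exactly once by the condition $clock_j > \mathcal{L}$. Once these invariants are formalised, termination follows mechanically from the finiteness of $h$ (guaranteed by Proposition~\ref{th1} for any finite input list) and of $\mathcal{L}$.
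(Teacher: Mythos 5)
Your argument is sound, but it takes a genuinely different route from the paper's. The paper proves termination qualitatively: it observes that in Phase~$2$ every node copies its \emph{match} flag into its \emph{state} flag, forwards it upward, and then zeroes the \emph{match} flag, and then splits into two cases according to whether any \emph{match} flag is $1$ at the end of Phase~$1$ --- in the first case the system settles into the all-zero \emph{state} configuration, in the second the root's \emph{state} latches to $1$ and stays there; in either case a stable (fixed-point) configuration is reached, hence termination. You instead prove termination quantitatively, by exhibiting an explicit upper bound $\mathcal{L}+2(h-1)$ on the number of global steps via induction on depth for the downward wave of Phase~$1$ and reverse induction for the upward wave of Phase~$2$. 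Your route is essentially the argument the paper defers to Proposition~\ref{prop3} (where Phase~$1$ is charged $w+h$ steps and Phase~$2$ another $h$), so you are in effect proving the stronger complexity statement and obtaining termination as a corollary; this buys a concrete bound and avoids the somewhat informal claim that Phase~$2$ ``reaches'' its terminal scenario, at the cost of the clock-alignment bookkeeping you correctly identify as the remaining work. The paper's case analysis buys a description of the terminal flag configuration (which the correctness of the search result relies on), something your step count alone does not provide. Either approach suffices for the proposition as stated.
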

\begin{proof}
	
	For any finite input list, Phase $1$ of the scheme allows the spread of initiate signal and memory bits, after which the scheme essentially enters into Phase $2$. In Phase $2$, each node first copies the {\em match} flag to the {\em state} flag and then sends it to its parent so that the {\em match} signal reaches the root. Once the {\em match} signal is sent, a node switches its match to $0$. Now, there can be two possible cases-- 
	\begin{itemize}
		\item Let at the end of Phase $1$, the {\em match} flags of all the nodes be $0$. Then, no match signal (through the {\em state} flag) reach the root. Hence, the {\em state} of the {\em root} remains $0$. The Phase $2$ reaches to a scenario where all the {\em state} flags of the nodes are $0$. 
		
		\item At the end of Phase $1$, there exist some {\em match} flags in the Cayley tree nodes that contain $1$. Hence, at the end of Phase 2, the {\em match} signal reaches (through the {\em state} flag) the root, and the {\em state} of the {\em root} switches to $1$, and it remains $1$. The Phase $2$ reaches as all but one {\em state} flag is $1$.
	\end{itemize} 	
	In both the cases, the scheme terminates.
\end{proof}


\begin{proposition}\label{prop3}
	Time complexity of the Cayley tree based {\em in-memory searching} is $\mathcal{O}(\log{}n)$.
\end{proposition}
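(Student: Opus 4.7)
The plan is to bound the total running time by the height $h$ of the Cayley tree plus the word length $\mathcal{L}$, then invoke Proposition~\ref{th1} to convert $h$ into $\log n$.

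First I would use Proposition~\ref{th1} to observe that, for a tree of order $\eta\geq 2$, the node count satisfies $n = 1+(\eta+1)\sum_{i=0}^{h-2}\eta^i = \Theta(\eta^{h-1})$, and hence $h = \Theta(\log_\eta n) = \Theta(\log n)$. For $\eta=1$ the tree degenerates to a path and $h = n-1$, so I would either exclude this trivial case or note it separately; in every practical setting assumed in the paper, $\eta\geq 2$.

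Next I would bound Phase~$1$ and Phase~$2$ separately. In Phase~$1$, the root first emits an \emph{initiate} signal and then the $\mathcal{L}$ key bits one per step (Procedure~\ref{alg:send}). Since every node forwards the bit it just received to its children in the following step, the transmissions pipeline down the tree: the initiate signal reaches the leaves after $h-1$ steps, and the last key bit arrives at the leaves by step $h-1+\mathcal{L}$. Each intermediate node updates its flags (via $f_{in}$) in $O(1)$ time per received bit, so Phase~$1$ completes in $h+\mathcal{L}-1 = O(h+\mathcal{L})$ time. In Phase~$2$, every node replaces its \emph{state} with its \emph{match} flag and sends it upward; a node forwards any $1$ it receives from a child in the next step (Procedure~\ref{alg:rcv}). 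Thus a match signal originating at depth $d\leq h-1$ reaches the root in at most $d$ additional steps, so Phase~$2$ terminates in $O(h)$ time. Summing, the total running time is $O(h+\mathcal{L})$.

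Finally, treating the word width $\mathcal{L}$ as a fixed machine parameter (or at worst $\mathcal{L}=O(\log n)$, which is the standard word-RAM assumption consistent with addressing $n$ memory words), substituting $h=\Theta(\log n)$ yields the claimed $O(\log n)$ bound. The only subtlety I anticipate is justifying the pipelining argument rigorously from the procedures, i.e.\ showing that a node never has to \emph{wait} in Phase~$1$ beyond a single step between consecutive bits from its parent; this follows directly from the fact that the parent also takes exactly one step per bit, so the inter-arrival time at every level is $1$. Once this is stated, the two $O(h)$ bounds fall out immediately and the proposition follows.
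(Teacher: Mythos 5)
Your proposal is correct and follows essentially the same route as the paper's proof: both decompose the running time into Phase~$1$ (initiate signal plus the $w$ key bits pipelined down the tree, costing $w+h$ steps) and Phase~$2$ (match signals propagating up, costing $h$ steps), treat the word size as a constant, and invoke Proposition~\ref{th1} to get $h=\mathcal{O}(\log n)$. Your additional care about the degenerate $\eta=1$ case and the explicit justification of the one-step-per-bit pipelining are refinements the paper leaves implicit, but they do not change the argument.
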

\begin{proof}
	Let assume that the word size of the memory is $w$ and the height of the tree is $h$. The Phase~1 is performed in $w+h$ time steps to send the {\em initiate} signal and every bits of the key to all the nodes of the tree. On the other hand, Phase $2$ is computed for $h$ iterations to send the {\em match} signal to the root. Hence, after $w+2h$ time steps, the IMC platform can decide on the outcome of the searching by sensing the {\em root} only. Since the word size $w$ and the number of flags are constant for the IMC model, the execution time depends on the height $h$ only. That is, the complexity of the search in the IMC platform is $\mathcal{O}(h)$. Since the height $h=\mathcal{O}(\log{n})$ (Proposition~\ref{th1}), the time complexity of the proposed {\em in-memory searching} scheme is $\mathcal{O}(\log{n})$.
\end{proof}


Since each Cayley tree node in the {\em in-memory searching} scheme assumes memory word and three flags. That is, apart from the elements, the additional space required for the searching is $3n$ bits.

\section{Computing Max (Min) In-memory}\label{sec-max}
This section introduces the {\em computing max (min)} of an input list in the IMC platform. As in Section~\ref{case1}, the elements of a list are distributed  in the nodes (except the {\em root}) of the Cayley tree of IMC platform. The {\em root} is loaded with $0$. At the end of the computation, the of the {\em root} contains the max (min) element.

Here, each node of the tree contains $\eta+4$ flags in addition to a memory word. Let consider that the order of the tree $\eta=2$, then the flags are {\em state}, {\em start}, {\em link of memory} ($l_{m}$), {\em link of left child} ($l_{c_1}$), {\em link of right child} ($l_{c_2}$) and {\em link of parent} ($l_{p}$). Since the root has no parent, the $l_p$ for root corresponds to its middle child. 

For computation of {\em max}, the {\em state} and {\em start} of each leaf node are set to $1$, and the {\em state} and {\em start} of other nodes of the tree are set to $0$. The other flags, i.e., $l_{m},~l_{c_1},~l_{c_2}$ and $l_{p}$ of each node of the tree are set to $0$. Then each leaf sends its {\em state} to its parent as the {\em initiate} signal. The other nodes, after receiving the {\em initiate} signal, update their flags and forward the {\em initiate} signal to their parents. From the second iteration onward, a bit (at MS digit position) of the memory word of each leaf is assigned to its {\em state} flag. It then sends the updated {\em state} flag to its parent and performs an one bit \verb*|circular-left-shift| to its memory. This process repeats till all the memory bits are scanned and sent to its parent. 

During this process, when an intermediate node first receives the initiate signal, its {\em start} flag is set to $1$ and the initiate signal is forwarded to its parent. After that, the node receives bits from each child if the corresponding link flag is $0$. Similarly, the node take the {\em MSB} of its memory word if the corresponding flag $l_m$ of its memory is $0$. It then perform logical \verb*|OR| operation on all the bits (received bits and {\em MSB}). Note that, if any link flag say, $l_c$ (for a child say, $c$) or $l_m$ (for its memory word) is $1$, the received bit from the child $c$ or the {\em MSB} is not considered for the \verb*|OR| operation. The outcome of the \verb*|OR| operation is then assigned to its state.
The {\em state} is then sent to its parent and \verb*|circular-left-shift| is done on its memory word.

Once the root receives the initiate signal, it sets its {\em start} flag to $1$. From the next iterations, it receives a bit from each child, say $c_1$, if the corresponding link of the child that is, $l_{c_1}$ is $0$. That is, the root receives $\eta+1$ bits from its children if all its child link flags --that is, $l_{c_1}$, $l_{c_2}$, $\cdots$ $l_{c_{\eta}}$ and $l_{c_p}$ contain $0$. It then perform a logical \verb*|OR| operation of the received bits. The out come of the \verb*|OR| operation is stored in the {\em MSB} of its memory. After this operations, the root performs a \verb*|circular-left-shift| operation on its memory word. At the end of the scheme, the memory word of the root contains the {\em max} element of the input list.
The earlier discussion points to the fact that the computation of max is done through execution of two tasks (procedures) within a node. These are the \verb*|send-max| (Procedure~\ref{alg:send_max}) and \verb*|receive-max| (Procedure~\ref{alg:rcv_max}). 
Here, in the procedures,
\begin{itemize}
	\item $\mathscr{B}_j$ is the memory of node $j$.
	\item $state_j$ and $start_j$ are the {\em state} and {\em start} flags of a node $j$. Additionally, one memory flag say, $l_m$ and $\eta+1$ link flags for its $\eta+1$ neighbors are there. A link flag $l_x$ of a node corresponds to the neighbor $x$ of the node. A node does not consider the received bit from $x$ if $l_x$ is $1$.
\end{itemize}
\begin{algorithm}[ht]
	\begin{scriptsize}
		\begin{algorithmic}		
			\Require A word $\mathscr{B}_j$.
			\State(Local Variable)
			\State $clock_j\gets 0;~ state_j\gets 0;$
			\State \textbf{Send: $j$ sends one bit $(state_j)$ to $k$:}
			
			\If{$j\in leaves$}
			\If{$clock_j$ = $0$}
			\State $state_j \gets 1$
			\Else
			\State $state_j \gets \mathcal{MSB}.\mathscr{B}_j$
			
			\EndIf
			\ForAll{$d\in state_j.k$}
			\If{$d\in state_j.parent$}
			\SEND $(state_j)$ to $d$;
			\EndIf
			\EndFor
			\State \verb*|Circular-Left-Shift| on $\mathscr{B}_j$	
			\ElsIf {$j\in intermediate-nodes$}
			\ForAll{$d\in state_j.k$}
			\If{$d\in state_j.parent$}
			\SEND $(state_j)$ to $d$;
			\EndIf
			\EndFor
			\EndIf	
			
			\State$clock_j \gets clock_j+1$;	
			
		\end{algorithmic}
	\end{scriptsize}
	\caption{\begin{scriptsize}
			send-max operation in {\em computing max} of a node $j,~1\leq j\leq n$
	\end{scriptsize}}\label{alg:send_max}
	
\end{algorithm}

			%
			%
	%

\begin{algorithm}[ht]
	\fontsize{8.5}{7.5}\selectfont
		\begin{algorithmic}
			
			\Require A word $\mathscr{B}_j$.
			\State(Local Variable)
			\State $clock_j\gets 0;~ state_j\gets 0;~ start_j\gets 0;~ l_{\mathscr{B}_j}\gets0;$
			\ForAll{$\mathcal{Y}\in neighbours.j$}
				\State $l_\mathcal{Y}\gets0$;
			\EndFor
			
			\State \textbf{Receive: $j$ receives one bit $(state_i)$ from $i$:}
			\If{$j=root$}
			\If{$start_j=0$}
			
			\State \textbf{wait until} a bit is received.
			\Else
			\ForAll{$d\in state_i.j$ and $d\neq state_{i(parent)}.j$ and $l_d=0$} 
			\State $s\gets s\vee state_d$;
			\If{$state_d\neq s$}
			\State $l_d\gets 1$;
			\EndIf
			\EndFor	
			
			\State $state_j\gets s$
			
			\State $\mathcal{MSB}.\mathscr{B}_j\gets state_j$
			\State \verb*|Circular-Left-Shift| on $\mathscr{B}_j$						
			\EndIf
			
			\ElsIf {$j\in intermediate-nodes$}
			\If{$start_j=0$}
			\State $state_j\gets state_i$.
			\Else
			\ForAll{$d\in state_i.j$ and $d\neq state_{i(parent)}.j$ and $l_d=0$}
			\State $s\gets s\vee state_d$;
			\If{$state_d\neq s$}
			\State $l_d\gets 1$;
			\EndIf
			\EndFor
			\If{$l_{\mathscr{B}_j}=0$}
			\State $s\gets s\vee \mathcal{MSB}.\mathscr{B}_j$;
			\If{$\mathcal{MSB}.\mathscr{B}_j=s$}
			\State $l_{\mathscr{B}_j}\gets 0$
			\Else
			\State  $l_{\mathscr{B}_j}\gets 1$
			\EndIf
			\EndIf	
			
			\State $state_j\gets s$
			\State \verb*|Circular-Left-Shift| on $\mathscr{B}_j$
			\EndIf
			
			
			\EndIf
			\State $clock_j \gets clock_j+1$;
			
		\end{algorithmic}
	\caption{\begin{scriptsize}receive-max operation in {\em computing max} of a node $j,~1\leq j\leq n$\end{scriptsize}}\label{alg:rcv_max}
\end{algorithm}


\begin{figure}[hbt!]
	\begin{center}
		\scalebox{1}{
			\begin{tabular}{ccc}
				\includegraphics[width=0.45\textwidth]{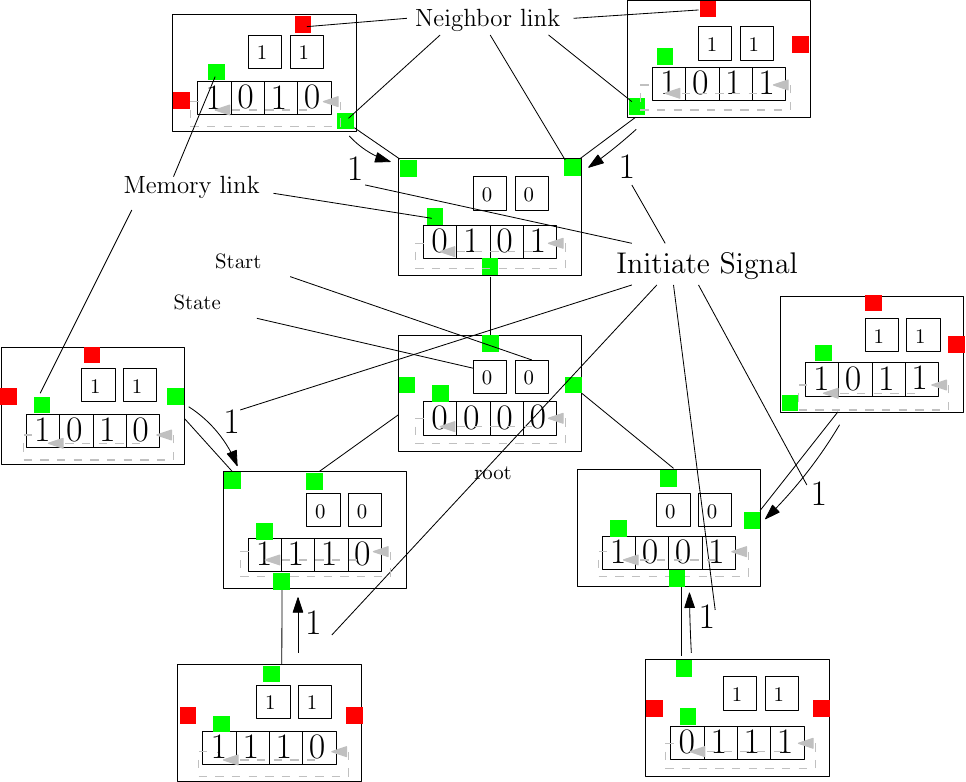}&&
				\includegraphics[width=0.45\textwidth]{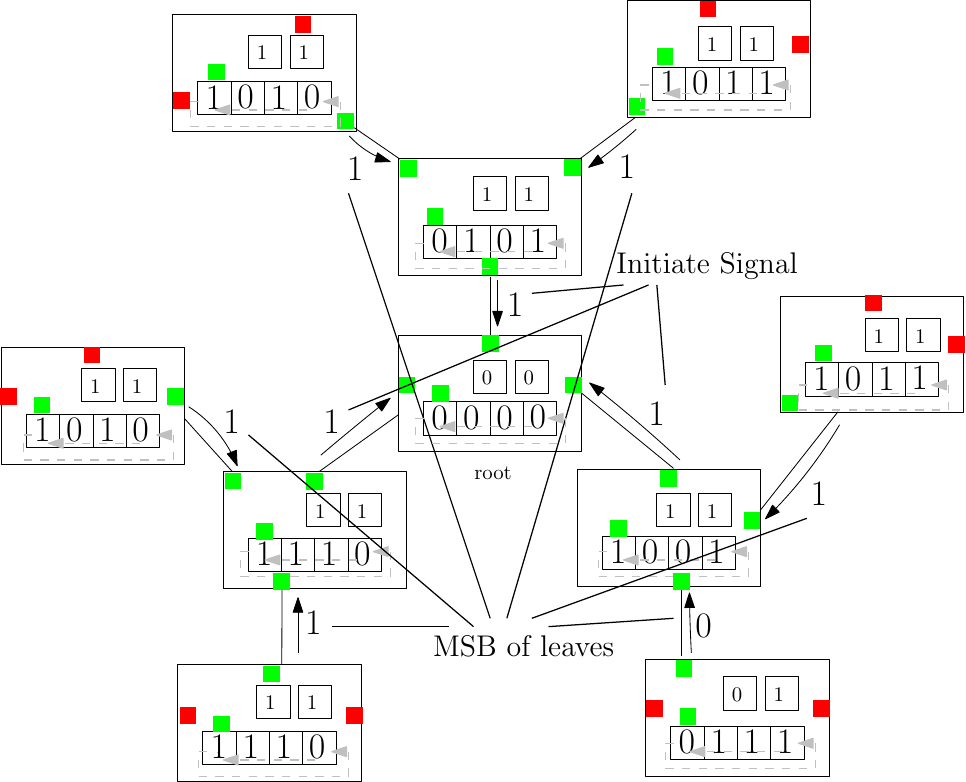}\\
				(a)&&(b)\\
		\end{tabular}}
	\scalebox{1}{
		\begin{tabular}{ccc}
			\includegraphics[width=0.45\textwidth]{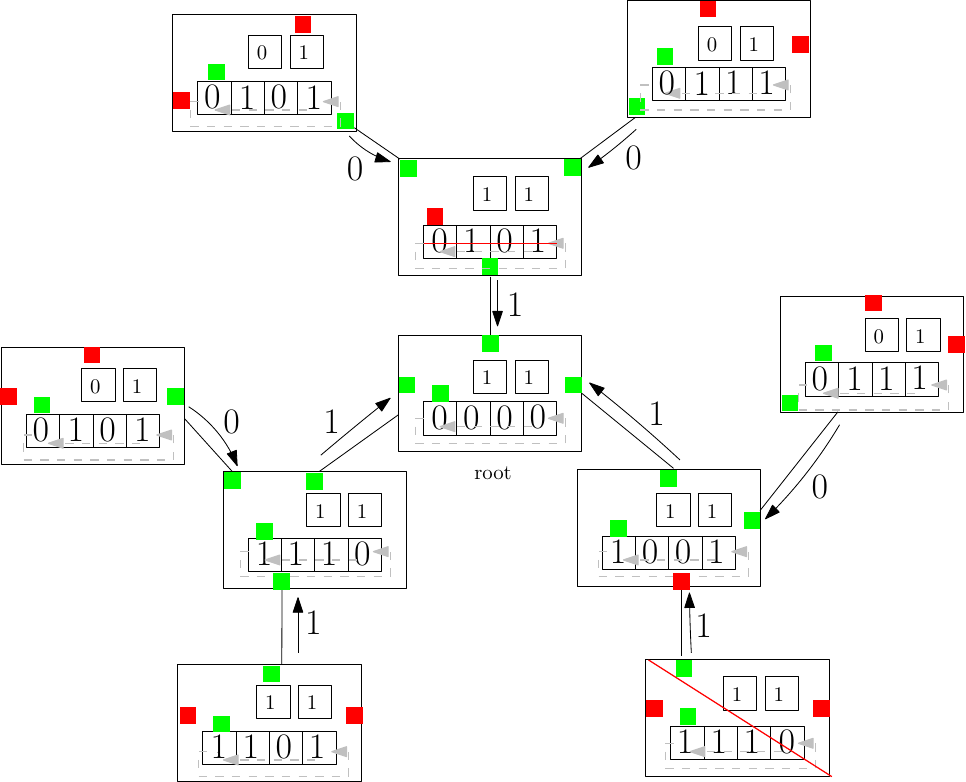}&&
			\includegraphics[width=0.45\textwidth]{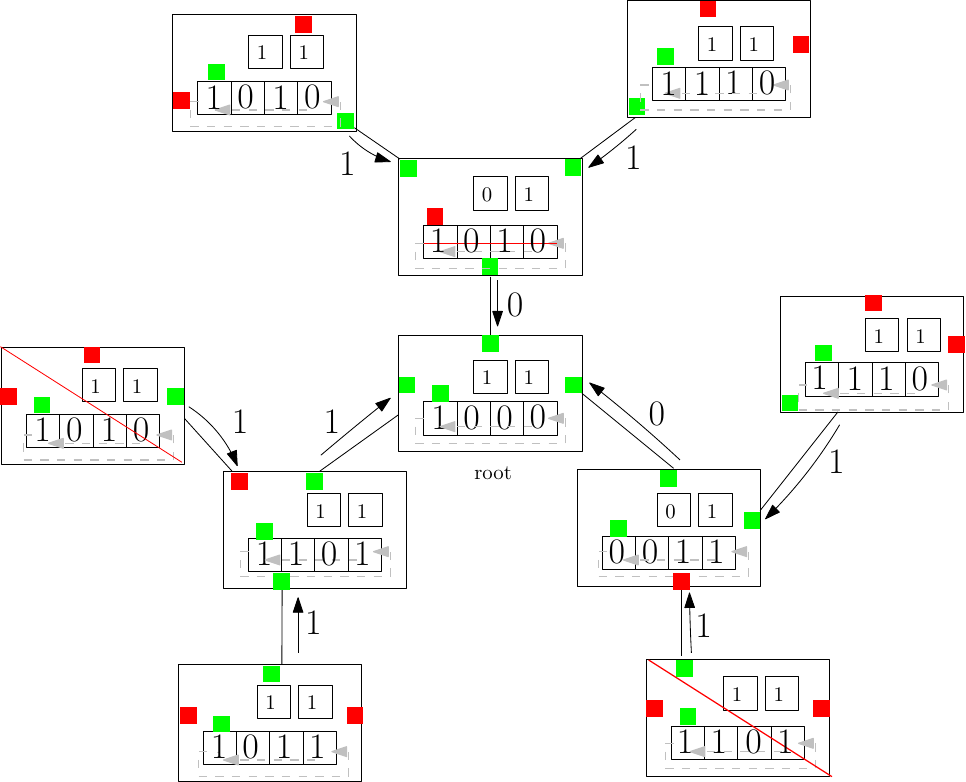}\\
			(c)&&(d)\\
	\end{tabular}}
		\scalebox{1}{
			\begin{tabular}{ccc}
				\includegraphics[width=0.45\textwidth]{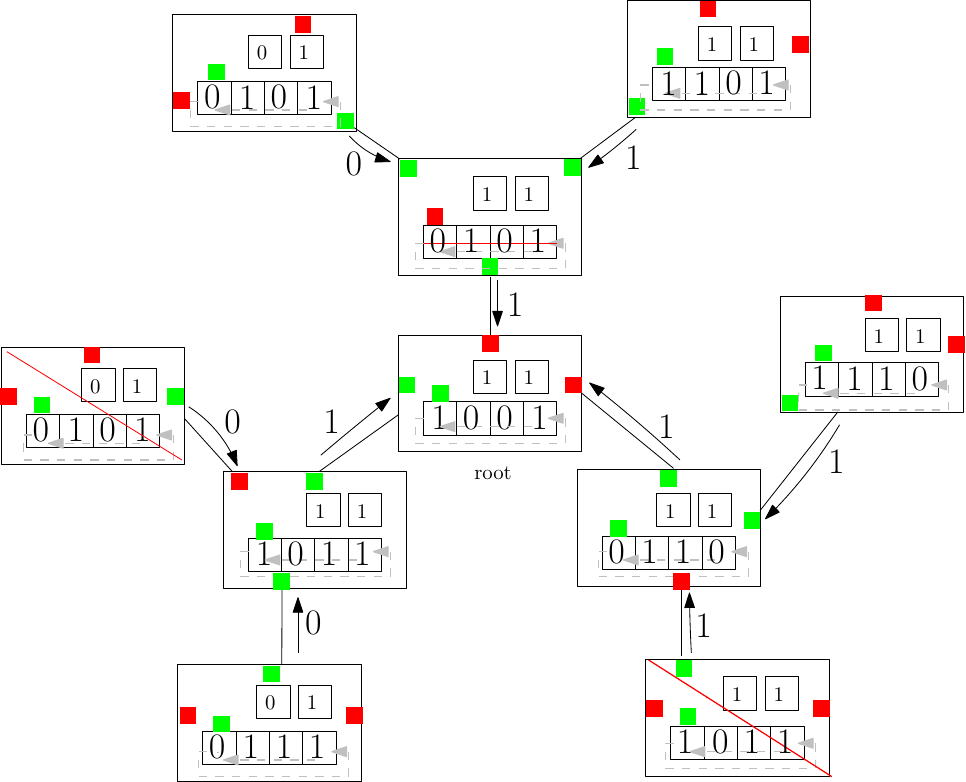}&&
				\includegraphics[width=0.45\textwidth]{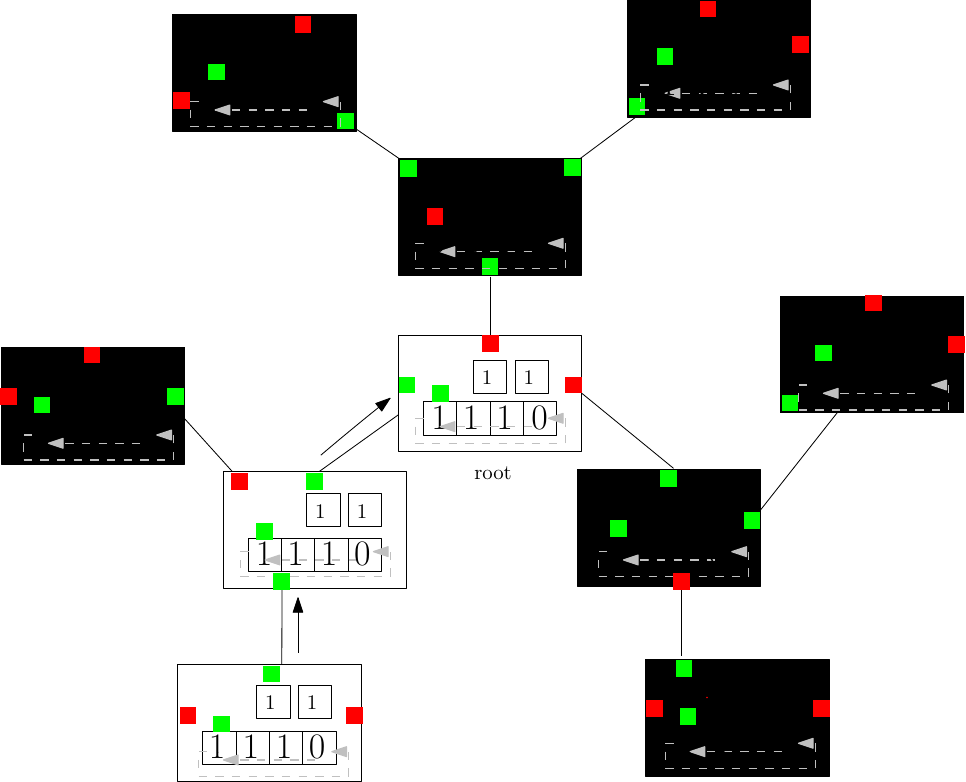}\\
				(e)&&(f)\\
		\end{tabular}}
				%
		\caption{The implementation of {\em computing max in-memory} with a Cayley tree of height $3$. Initially, the elements are $14,9,5,14,7,11,10$ and $10$, which are distributed among the nodes. The memory word of the root contains $0$. (a) The initialization of flags; (f) The root contains the max element}
		\label{fig:max1}
	\end{center}
	\vspace{-2 em}
\end{figure}

Figure~\ref{fig:max1} shows an example of {\em computing max in-memory}, where the order of the tree $\eta=2$, and $9$ nodes (except the root) are there in the tree, where the memory word of each node contains a $4$-bit element. The elements are $14,~9,~5,~14,~7,~11,~10$ and $10$, which are distributed among the nodes. The memory word of the root contains $0$. Figure~\ref{fig:max1}(a) shows the initialization of each node of the tree for execution, where the state, start, link of the memory flag, and the link of the child flags are marked. The green boxes (resp. red boxes) in a node indicate the link flags representing the flag with value $0$ (resp. $1$).

In Figure~\ref{fig:max1}(a), the leaves start the execution by sending an initiate signal to their parents. The leaves then send the MSB of their memory word and perform a circular shift on their memory word, whereas the intermediate nodes hold the initiate signal to its state, switch their start flag to $1$, and forward the initiate signal to the root (Figure~\ref{fig:max1}(b)). After that, in each iteration, each leaf sends its MSB to its parent and performs a circular left shift on its memory word (Figure~\ref{fig:max1}(b)-Figure~\ref{fig:max1}(e)). After sending the initiate signal, an intermediate node starts receiving MSBs of its children's memory words. It then performs a logical \verb*|OR| operation on the received bits; for example, the bottom right child of the root in Figure~\ref{fig:max1}(b) receives $0$ and $1$ from its children; the MSB of its memory word contains $1$, i.e., its state switches to $1$ (i.e., $0\vee 1\vee 1$) in its next iteration. Since one child sent $0$, its link flag is set to $1$ --that is, the link flag is switched to red (Figure~\ref{fig:max1}(c)). The node does not consider this child (the child is disabled (marked using a red line)) in further iterations (Figure~\ref{fig:max1}(c)). Similarly, the upper child of the root in Figure~\ref{fig:max1}(c) disables its memory by setting the memory link to red ($l_m$ is set to $0$). 

The detailed execution of the given example is shown in Figure~\ref{fig:max1}. In Figure~\ref{fig:max1}(f), the execution ended, and the root contains the max value. Since two child links of the root become red during execution, the corresponding nodes (sub-trees) that are connected with these red links are marked as black. The black node indicates that the node is disabled and does not contribute to the execution. A similar thing happens with one leaf node on the left in Figure~\ref{fig:max1}(f).

\emph{Complexity}: Similar to Proposition~\ref{prop3}, let assume that a Cayley tree contains $n$ nodes, the word size is $w$ and the height of the tree is $h$. To compute max, at first the {\em initiate} signal traverses from leaf to the root. Then each node (except the root) sends the {\em MSB} of its memory to its parent and performs circular left shift on memory. Now, one iteration is needed to traverse the initial signal, and a maximum of $w+h$ iterations are needed to transfer each bit of the memory word from a leaf node to the root. However, each non-leaf node processes the received bit and sends it to its parent in constant time. This implies, the {\em computing max} requires $\log{n}+w+1$ steps. As $w$ is constant and much lesser than $\log{n}$, the time complexity of computing max in-memory is $\mathcal{O}(\log{n})$.





The {\em computing min} scheme, on the other hand, is similar to that of {\em computing max} except the logical operation defined for computing max. In case of {\em computing min}, the logical \verb*|AND| operation is to be done, instead of \verb*|OR|. Hence, a node obtains the minimum bit by performing logical \verb*|AND| among the received bits and {\em MSB} of memory word. It then, disables a link $l_{c_1}$ if the {\em state} of the child $c_1$ is not equal to the result of the \verb*|AND| operation. Similarly, it disables the memory link $l_{m}$ if the {\em MSB} of its memory word is not equal to the result of the \verb*|AND| operation.

\section{In-Memory Sorting}\label{sec-sorting}
The {\em in-memory sorting} scheme is developed following the {\em in-memory searching} and {\em computing max (min)} schemes introduced in Section~\ref{case1} and Section~\ref{sec-max} respectively. Let the elements of the input list are distributed among the nodes of the Cayley tree and the memory of {\em root} is loaded with $0$, where $w$ be the word size of memory and $h$ is the height of the tree. The first max can be obtained in {\em root} node's memory after the $(w+h+1)$ steps (Section~\ref{sec-max}). From this and onwards, after every $2(w+h+1)$ steps (that includes searching an element and computing max), the {\em root} contains the next max element. Hence, the content of memory word of {\em root}, after every $2(w+h+1)$ steps, can be reported (stored) in different memory locations and, thereby, the sorted list is found. The number of elements with same value are stored in consecutive memory locations using the standard procedure of in-memory copy / in-memory bulk copy.

For {\em in-memory sorting}, the required flags are the {\em state}, {\em start}, {\em match}, and $\eta+2$ link flags --that is, $l_{c_1}$, $l_{c_2}$, $l_{p}$ and $l_{m}$ for the $\eta=2$ Cayley tree. The {\em in-memory sorting} effectively consists of two phases for each cycle. {\em Phase A} identifies the max element that can be found in the root. {\em Phase B}, on the other hand, identifies the nodes that contain the max element and disables the nodes. The detail of Phase A and Phase B follows. 

 \paragraph{Phase A} The flags of all the nodes are set as in the computing max scheme, and the {\em match} flag is set to $1$. At first the computing max is done (Section~\ref{sec-max}) so that the {\em root} contains the max element. Once max is in root, the element is reported (stored). Then, the flags of all the nodes are reset to enable {\em search} operation (Section~\ref{case1}). During this reset operation, all the links which are disabled during this phase are also enabled.
	
  \paragraph{Phase B}  In {\em Phase B}, {\em Phase~1} of the search operation (Section~\ref{case1}) is performed to identify the nodes that contain the same element as in the root. Next, the $l_{m}$ of the identified nodes are disabled permanently so that the element of the node cannot further participate in the sorting process. Contents of these nodes (max elements) along with the content of the root are stored (reported) in consecutive memory locations using standard procedure of in-memory copy / in-memory bulk copy. Such permanently disabled links cannot be enabled during any {\em reset} operation. Now, the flags of all the nodes are reset for the next {\em Phase A} --that is, computing the next max element.

These two phases repeat during the {\em in-memory sorting} until all the $l_{m}$'s of the nodes are permanently disabled. Procedure~\ref{alg:in-memory-sorting} provides the algorithmic steps for the {\em in-memory sorting}. It is to be noted that, the number of iterations (Phase A and Phase B) is same as the number of unique elements in the input list.

\begin{proposition}\label{in-memory-sort-p1}
	Time complexity of the Cayley tree based {\em in-memory sorting} is $\mathcal{O}(n\log{n})$.
\end{proposition}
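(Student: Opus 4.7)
The plan is to decompose the total running time into the number of cycles multiplied by the cost of a single cycle, using the earlier complexity results for in-memory searching (Proposition~\ref{prop3}) and computing max in-memory.

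First I would bound the cost of one cycle (Phase~A followed by Phase~B). In Phase~A, computing max in-memory has been shown to run in $w + h + 1$ steps, plus a constant number of steps to report the root's value and reset the flags. In Phase~B, only Phase~1 of the search is invoked to identify the nodes matching the current max and to disable their $l_m$ flags; by the analysis preceding Proposition~\ref{prop3}, this takes $w + h$ steps (plus $\mathcal{O}(h)$ for the permanent disabling propagation, which is absorbed). Since $w$ is a fixed word size and $h=\mathcal{O}(\log n)$ by Proposition~\ref{th1}, the total cost of a single cycle is $2(w+h+1) + \mathcal{O}(1) = \mathcal{O}(\log n)$.

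Next I would bound the number of cycles. The text already observes that the cycle count equals the number of distinct values in the input list, because each Phase~B permanently disables every node whose memory equals the current maximum. In the worst case all $n$ elements of the list are distinct, so the number of cycles is at most $n$. Multiplying, the total running time is at most $n \cdot \mathcal{O}(\log n) = \mathcal{O}(n \log n)$, which is the claimed bound.

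The main obstacle, such as it is, is being careful about what ``one cycle'' actually costs: one must check that disabling the matching nodes and resetting the non-permanent flags really can be folded into the $\mathcal{O}(\log n)$ budget (i.e.\ that the reset signal propagates in $\mathcal{O}(h)$ time and does not require rereading the entire tree), and that the reporting of a max value into an external sorted region does not dominate. Once these are argued from the per-node constant-work description of Procedures~\ref{alg:rcv}--\ref{alg:rcv_max}, the $\mathcal{O}(n\log n)$ bound follows immediately.
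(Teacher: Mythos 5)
Your proposal is correct and follows essentially the same route as the paper's proof: bounding one cycle by the $w+h+1$ costs of Phase~A (computing max) and Phase~B (Phase~1 of the search plus disabling/reset), noting the cycle count is the number of distinct elements (at most $n$), and concluding $\mathcal{O}(nh)=\mathcal{O}(n\log n)$. Your extra remarks about folding the reset and reporting into the per-cycle budget are a reasonable elaboration of what the paper asserts implicitly.
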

\begin{proof}
	Let the word size be $w$, and the height of the Cayley tree be $h$. The Phase~A is similar to the {\em computing max (min)} scheme. Hence, it performs in $w+h+1$ time steps to send the {\em initiate} signal and $w$ bits (memory word) from leaves to the {\em root} in the Cayley tree. At the end of this phase, the {\em root} contains the max element. An extra time step is then required to reset the flags.
	
	On the other hand, Phase B is similar to Phase 1 of the {\em in-memory searching} of Section~\ref{case1}. Therefore, it requires $w+h+1$ time steps. Then it disables the $l_{m}$s of the nodes that contain the max element and reset the flags of all the nodes of the tree for {\em computing} the next max (min) element (next Phase A).
	
	The above processes repeat $n$ times, where $n$ is the number of unique elements in the input list. Since the word size $w$ and the number of flags are constant, the execution time depends on the $h$ only. That is, the time complexity of the {\em in-memory sorting}, in the IMC platform, is $\mathcal{O}(nh)$ --that is, $\mathcal{O}(n\log{n})$.
\end{proof}



The proposed sorting scheme consists of searching and computing max. Hence, the required number of flags for the scheme is $\eta+5$. That is, the additional space requirement is $n(w+\eta+5)$.


\begin{algorithm}[ht]
	\begin{scriptsize}
		\begin{algorithmic}			
			\State \textbf{Step 1:} Find the max (min) value using {\em computing max (min)} (Procedure~\ref{alg:send_max} and Procedure~\ref{alg:rcv_max}).
			\Comment{At the end of step one, the {\em root} contains the max (min) value.}\\
			\State \textbf{Step 2:} \textbf{Report} the element of the {\em root} and \textbf{reset} the flags for the searching ({\em in-memory searching}). During \textbf{reset}, a disabled link is switched to enable if it was disabled during Step 1.\\
			\State \textbf{Step 3:} The Phase~1 of {\em in-memory searching} is performed using Procedure~\ref{alg:send} and Procedure~\ref{alg:rcv}. After that, the $l_{m}$ flag of a node is disabled if its {\em match} flag is $1$ (the node contains the {\em search key}).
			\Comment{The memory of a node does not participate in the upcoming (next) {\em computing max (min)} procedure if its $l_{m}$ is disabled during this step.}\\
			\State \textbf{Step 4:} \textbf{Reset} the flags for the next {\em computing max (min)} procedure. During this, the disabled $l_{m}$ is not switched to enable. \\
			\State \textbf{Step 5:} Repeat Step 1 to Step 4 until the $l_{\mathscr{B}_j}$ of all the nodes are disabled.			
		\end{algorithmic}
	\end{scriptsize}
	\caption{\begin{scriptsize}In-memory sorting\end{scriptsize}}\label{alg:in-memory-sorting}
\end{algorithm}

\begin{table*}[ht]
	\centering
	\caption{Comparison of proposed {\em in-memory sorting} with the standard algorithms}	
	\begin{adjustbox}{width=1\textwidth,center}
		\begin{tabular}{|l|l|l|l|}
			\hline
			\textbf{Sorting algorithms}                                                                         & \textbf{\begin{tabular}[c]{@{}l@{}}Best case time \\ complexity\end{tabular}} & \textbf{\begin{tabular}[c]{@{}l@{}}Average case time \\ complexity\end{tabular}} & \textbf{\begin{tabular}[c]{@{}l@{}}Worst case time \\ complexity\end{tabular}} \\ \hline
			Insertion sort                                                                    & $\mathcal{O}(n)$                                                              & $\mathcal{O}(n^2)$                                                               & $\mathcal{O}(n^2)$                                                             \\ \hline
			Selection sort                                                                    & $\mathcal{O}(n^2)$                                                            & $\mathcal{O}(n^2)$                                                               & $\mathcal{O}(n^2)$                                                             \\ \hline
			Bubble sort                                                                       & $\mathcal{O}(n^2)$                                                            & $\mathcal{O}(n^2)$                                                               & $\mathcal{O}(n^2)$                                                             \\ \hline
			Merge sort                                                                        & $\mathcal{O}(n\log{n})$                                                       & $\mathcal{O}(n\log{n})$                                                          & $\mathcal{O}(n\log{n})$                                                        \\ \hline
			Heap sort                                                                         & $\mathcal{O}(n\log{n})$                                                       & $\mathcal{O}(n\log{n})$                                                          & $\mathcal{O}(n\log{n})$                                                        \\ \hline
			Quick sort                                                                        & $\mathcal{O}(n\log{n})$                                                       & $\mathcal{O}(n\log{n})$                                                          & $\mathcal{O}(n^2)$                                                             \\ \hline
			Radix sort                                                                        & $\mathcal{O}(nd)$                                                             & $\mathcal{O}(nd)$                                                                & $\mathcal{O}(nd)$                                                              \\ \hline
			\textbf{\begin{tabular}[c]{@{}l@{}}In-memory sorting\end{tabular}} & \textbf{$\mathcal{O}(\log{n})$}                                               & \textbf{$\mathcal{O}(n\log{n})$}                                                 & \textbf{$\mathcal{O}(n\log{n})$}                                               \\ \hline
		\end{tabular}
		
	\end{adjustbox}
	
	\label{Table:comp-sort}
\end{table*}


Table~\ref{Table:comp-sort} provides the comparison of the proposed {\em in-memory sorting} with the standard sorting algorithms on various parameters, where each column of the table indicates the best case, average case, and worst case time complexities of different sorting algorithms. The last row shows the time complexities of the proposed {\em in-memory sorting}. The best case scenario of the in-memory sorting is -- if all the elements in the input list are same.

\section{FPGA Implementation of IMC Platform}\label{hardware}
The {\em in-memory sorting} follows the {\em in-memory searching} and {\em computing max in-memory}. The in-memory computing (IMC) platform defined around the Cayley tree is realized with FPGA for evaluation. In this section, we present the realization of {\em in-memory searching} in FPGA for illustration and evaluation. Two types of designs are realized. One is around a new memory architecture (NMA) defined for the proposed IMC platform, and the other one is based on the existing memory architecture (EMA). The following environment is considered for the IMC platform. 

\subsubsection*{Hardware}
Intel$\textsuperscript{\textregistered}$ Core$\textsuperscript{\texttrademark}$ $i7+8700$~ $\times64$-based Processor ($12$M Cache) with Intel$\textsuperscript{\textregistered}$ Optane$\textsuperscript{\texttrademark}$ Memory; and $8$ GB ($7.78$ GB usable) installed RAM.

\subsubsection*{Software}
$64-$bit OS; the system edition ``Windows $11$ Pro'' of version ``$22H2$''; and Vivavo ML $2023.1$ - standard. The simulation is done using Xilinx Vivado.


		%

\subsection{IMC around NMA}\label{DNMA}
%

%
\begin{figure}[h! tbp]
		\begin{center}
		\scalebox{1}{
			\begin{tabular}{c}
				\begin{adjustbox}{addcode={\begin{minipage}{\width}}{\end{minipage}},rotate=90}
					
					\includegraphics[width=14cm,height=12cm]{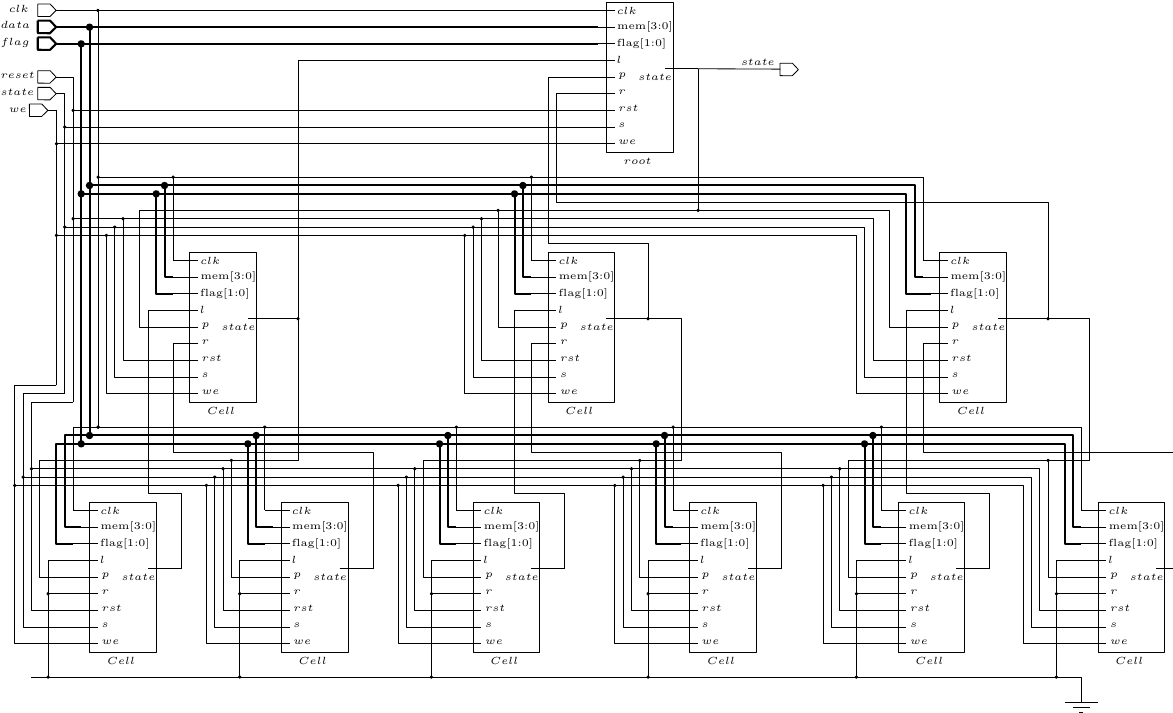}
				\end{adjustbox}\\
				a.\\
				\begin{adjustbox}{addcode={\begin{minipage}{\width}}{\end{minipage}},rotate=90}
					\includegraphics[width=4cm,height=12cm]{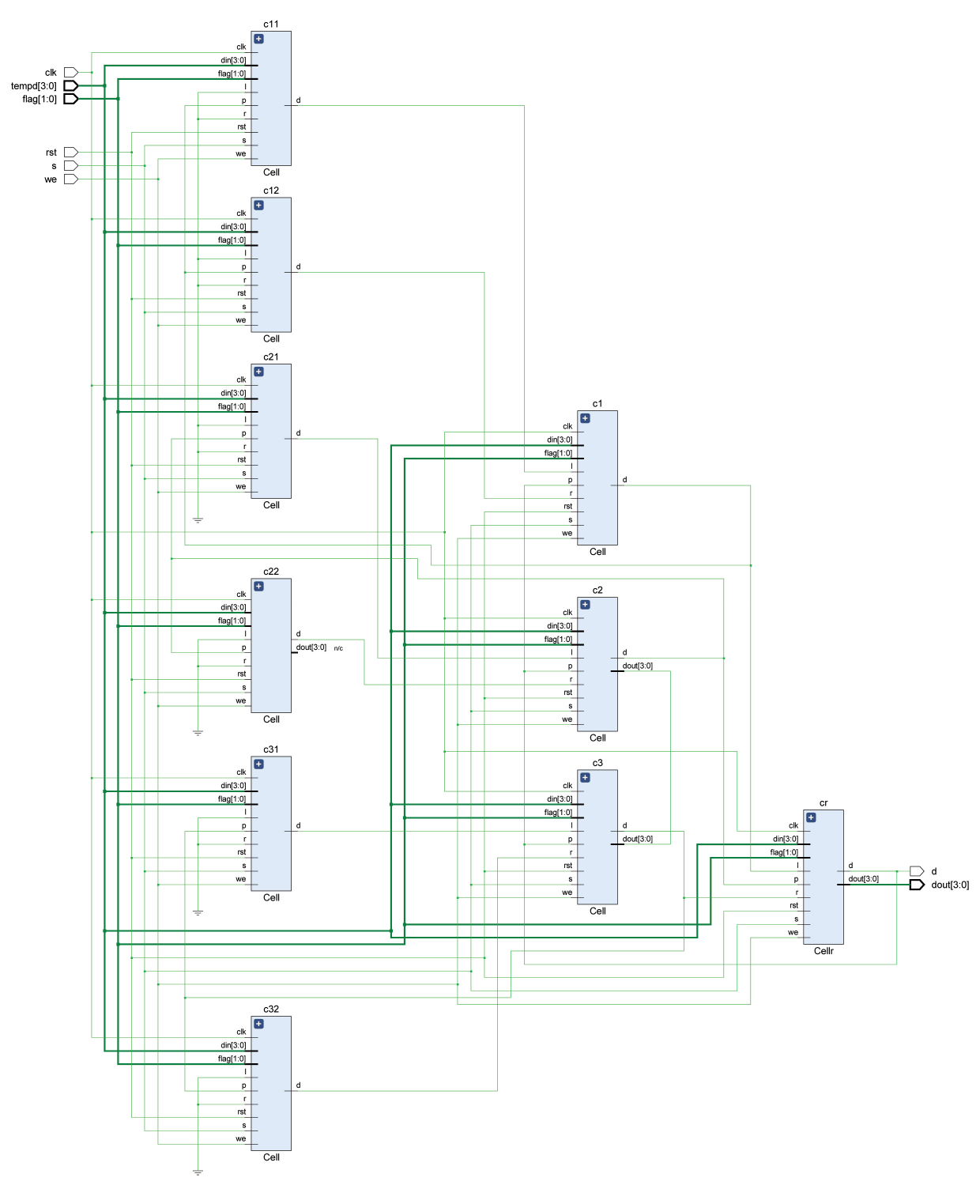}
				\end{adjustbox}
				\\
				b.\\
			\end{tabular}
	}
		\caption{\scriptsize RTL schematic of NMA platform; (a) RTL schematic, (b) Simulation in Vivado}
		\label{Intree}
			\end{center}
\end{figure}
\begin{figure}[h! tbp]
\begin{center}
	\renewcommand{\arraystretch}{-2}
	\scalebox{1}{
		\begin{tabular}{cc}
			\begin{adjustbox}{addcode={\begin{minipage}{\width}}{\end{minipage}},rotate=90}
				
				\includegraphics[width=1.2\textwidth]{ 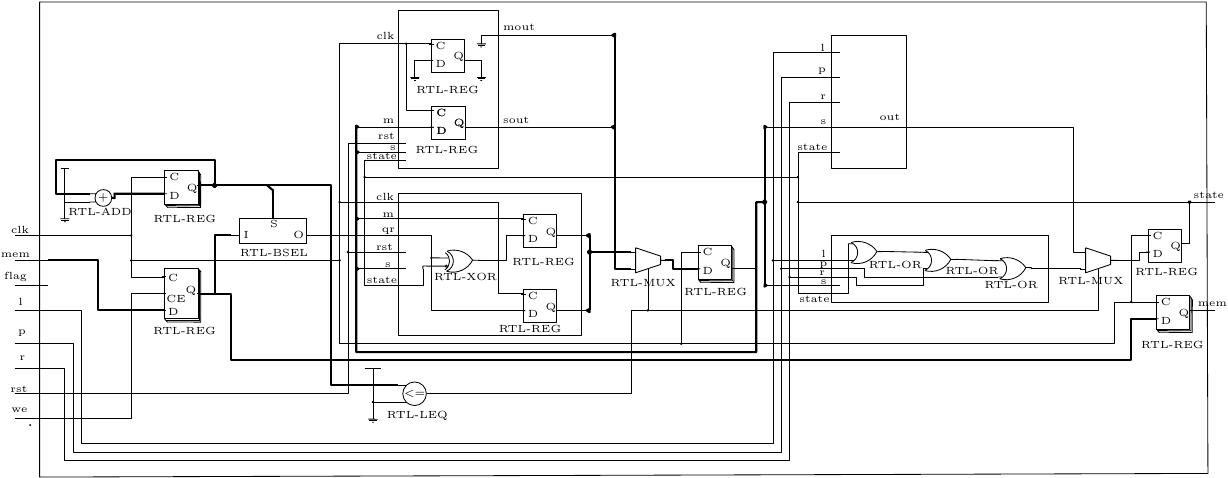}
			\end{adjustbox}&
			\begin{adjustbox}{addcode={\begin{minipage}{\width}}{\end{minipage}},rotate=90}
				\includegraphics[width=1.2\textwidth]{ 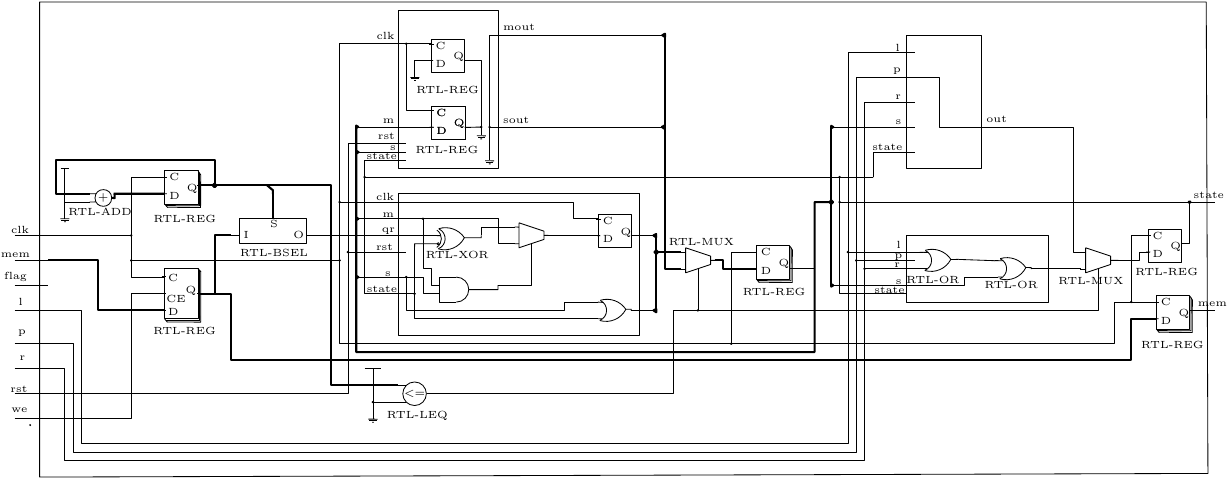}
			\end{adjustbox}
			\\
			a.&b.\\
	\end{tabular}}
	\caption{RTL schematic of NMA nodes; (a) {\em root} node, (b) Intermediate node. Ports $l,~p,$ and $r$ represent the states of left child, middle child/parent and right child. The $clk$ and $rst$ represent the clock and reset}
	\label{Inroot}
\end{center}
\end{figure}

The RTL schematic of NMA based design is shown in Figure~\ref{Intree} (Figure~\ref{Intree}a. is the RTL schematic and Figure~\ref{Intree}b. is the screenshot of the same RTL schematic simulation in Vivado), where the top cell is the {\em root} node and the other $9$ are the intermediate nodes and leaves of the Cayley tree. Initially, the $ data$ port is used to store an element to the memory word of a node. An input port $state$ is used to set/reset the {\em state} flag. The `$flag$' is used to set/reset the {\em start} and {\em match} flags. The input port, `$we$' is a `write enable' port. It is activated to store the element in memory word. The detailed RTL schematic of the {\em root} and intermediate node are shown in Figure~\ref{Inroot}. 

The behavioral analysis of the IMC platform for different word size is done to find the utilization of BUFG (global buffer), I/O, FF, LUT (look up table) and fully routed nets. For experimentation, the word size, ranging from $4$ to $64$ bits, is considered. 
Table~\ref{Table:util} reports the estimated utilization of NMA based realization. The second to tenth columns of the table note the results for different word size. The second row shows the utilization of BUFG. The third row indicates the I/O requirements for different word size as well as the percentage of total number of I/Os. Similarly, the fourth, fifth and sixth rows indicate the required number of Flip-Flops, LUTs and the fully routed nets respectively.
	
Table~\ref{Table:power} reports the power summary of NMA based platform. The second to tenth columns of the table note the results for different word size. The rows describe different parameters related to the power. The second row shows the Total on-chip power (W) that includes static power (leakage power) and dynamic power. The third row indicates the static power consumption of the programmable logic (PL) section. The fourth row indicates the power consumed by the I/O blocks that includes dynamic I/O power and static I/O power. The I/O interface determines how the model communicates with the other components in the system such as, sensors/ processors/ memory/ other peripherals. The key performance matrices of I/O are the data rate, signal integrity, timing and setup/hold margins, drive strength and slew rate control. The fifth row shows the power consumed by the programmable logic fabric that includes dynamic logic power (switching power, clock power etc.) and the static logic power (leakage power). The sixth row reports the  power consumed by the internal and external signals. The junction temperature (T$_j$) which refers to the temperature at the actual silicon die -that is, the `junction' where the transistors are located inside the FPGA, is shown in the seventh row. This parameter is very crucial and monitored during the development of NMA based platform. If the maximum allowable junction temperature exceeds, then it may cause malfunction. The eighth row shows the thermal margin which is the difference between the maximum allowable junction temperature (T$_{j(max)}$) and the actual operating junction temperature (T$_j$). Here, a positive ($+ve$) thermal margin implies that the operating temperature of the FPGA is below the maximum allowable temperature whereas, a negative ($-ve$) thermal margin occurs when the operating temperature exceeds the maximum allowable temperature. The last row of Table~\ref{Table:power} points to the effective instruction-level just-in-time acceleration ($\mathscr{I}JA$) which involves in dynamically optimizing the hardware execution of task at a very fine-grain level.
	
	
%

\begin{table*}[ht]
\centering
\caption{Estimated utilization in NMA based design}	
\begin{adjustbox}{width=1\textwidth,center}
	\begin{tabular}{|l|l|l|l|l|l|l|l|l|l|}
		\hline
		\textbf{\makecell{Estimated \\utilization}}  & \textbf{$4$-bit}                                         & \textbf{$8$-bit} & \textbf{$16$-bit} & \textbf{$24$-bit} & \textbf{$32$-bit} & \textbf{$40$-bit} & \textbf{$48$-bit} & \textbf{$56$-bit} & \textbf{$64$-bit} \\ \hline
		\textbf{\makecell{BUFG}}& \makecell{$1$\\$(3.13\%)$}  & \makecell{$1$\\$(3.13\%)$}  & \makecell{$1$\\$(3.13\%)$}   & \makecell{$1$\\$(3.13\%)$}  & \makecell{$1$\\$(3.13\%)$}   & \makecell{$1$\\$(3.13\%)$}   & \makecell{$1$\\$(3.13\%)$}   & \makecell{$1$\\$(3.13\%)$}    & \makecell{$1$\\$(3.13\%)$}    \\ \hline
		\textbf{\makecell{I/O} }                & \makecell{$11$\\$(3.67\%)$} & \makecell{$21$\\$(7.0\%)$}  & \makecell{$35$\\$(11.67\%)$} & \makecell{$51$\\$(17\%)$}   & \makecell{$67$\\$(22.33\%)$} & \makecell{$83$\\$(27.67\%)$} & \makecell{$99$\\$(33\%)$}    & \makecell{$115$\\$(38.33\%)$} & \makecell{$131$\\$(43.67\%)$}  \\ \hline
		\textbf{\makecell{FF}  }                  & \makecell{$55$\\$(0.07\%)$} & \makecell{$65$\\$(0.08\%)$} & \makecell{$79$\\$(0.10\%)$}  & \makecell{$95$\\$(0.12\%)$} & \makecell{$111$\\$(0.14\%)$} & \makecell{$127$\\$(0.15\%)$} & \makecell{$143$\\$(0.17\%)$} & \makecell{$159$\\$(0.19\%)$}  & \makecell{$175$\\$(0.21\%)$}  \\ \hline
		\textbf{\makecell{LUT} }                  & \makecell{$29$\\$(0.07\%)$} & \makecell{$31$\\$(0.08\%)$} & \makecell{$32$\\$(0.08\%)$}  & \makecell{$35$\\$(0.09\%)$} & \makecell{$37$\\$(0.09\%)$}  & \makecell{$39$\\$(0.10\%)$}  & \makecell{$41$\\$(0.10\%)$}  & \makecell{$43$\\$(0.01\%)$}   & \makecell{$45$\\$(0.11\%)$}   \\ \hline
		
		\textbf{\makecell{Fully\\ routed\\ nets}}& $79$& $91$& $111$&$138$& $163$& $186$ & $211$& $235$& $259$ \\ \hline
	\end{tabular}
\end{adjustbox}

\label{Table:util}
\end{table*}


\begin{table*}[h]
\centering
\caption{Power summary report for NMA based design}	
\begin{adjustbox}{width=1\textwidth,center}	
	\begin{tabular}{|l|l|l|l|l|l|l|l|l|l|}
		\hline
		\textbf{\makecell{Power \\summary}}    & \textbf{$4$-bit}                                         & \textbf{$8$-bit} & \textbf{$16$-bit} & \textbf{$24$-bit} & \textbf{$32$-bit} & \textbf{$40$-bit} & \textbf{$48$-bit} & \textbf{$56$-bit} & \textbf{$64$-bit} \\ \hline
		
		\textbf{\makecell{Total\\ on-chip\\ power (W)}}      & $4.661$& $8.548$ & $13.312$ & $19.347$ & $25.112$ & $31.052$  & $37.006$  & $43.595$   & $49.801$\\ \hline
		\textbf{\makecell{PL \\static (W)}}                & 0.093                               & 0.105                               & 0.125                                & 0.165                                & 0.226                                & 0.315                                & 0.444                                & 0.623                                & 0.867                                \\ \hline
		\textbf{\makecell{I/O (W)}}                      & 4.214                               & 7.989                               & 12.607                               & 18.417                               & 24.219                               & 30.008                               & 35.830                               & 41.636                               & 47.446                               \\ \hline
		\textbf{\makecell{Logic (W)}}                    & 0.122                               & 0.141                               & 0.170                                & 0.212                                & 0.213                                & 0.231                                & 0.247                                & 0.272                                & 0.285                                \\ \hline
		\textbf{\makecell{Signals (W)}}                  & 0.232                               & 0.314                               & 0.409                                & 0.552                                & 0.681                                & 0.813                                & 0.929                                & 1.063                                & 1.202                                \\ \hline
		\textbf{\makecell{Junction \\temperature\\ ($^oC$)}} & 33.8                                & 41.1                                & 50.1                                 & 61.4                                 & 72.7                                 & 84.1                                 & 95.5                                 & 107.1                                & 118.8                                \\ \hline
		\textbf{\makecell{Thermal \\margin\\ ($^oC$~($W$))}}       & 51.2 (27)                           & 43.9 (23.1)                         & 34.9 (18.4)                          & 23.6 (12.4)                          & 12.3 (6.5)                           & 0.9 (0.5)                            & -10.5 (-5.4)                         & -22.1 (-11.4)                        & -33.8 (-17.4)                        \\ \hline
		\textbf{\makecell{Effective \\$\mathscr{I}JA$ ($^oC/W$)}}      & 1.9                                 & 1.9                                 & 1.9                                  & 1.9                                  & 1.9                                  & 1.9                                  & 1.9                                  & 1.9                                  & 1.9                                  \\ \hline
		
	\end{tabular}
\end{adjustbox}

\label{Table:power}
\end{table*}

%
%



		%

%

It can be observed from Table~\ref{Table:power} that for the word size greater than $40$-bit, the junction temperature exceeds due to the use of large number of bits in a single node.
To overcome this, the storage can be kept out of a node. The evaluation of such a design follows next. 
%
%

\subsection{IMC around EMA}

This design is based on the existing RAM architecture.
\begin{figure}[h! tbp]
	\begin{center}
	\scalebox{1}{
			\begin{tabular}{c}
			\begin{adjustbox}{addcode={\begin{minipage}{\width}}{\end{minipage}},rotate=90}
				\includegraphics[width=14cm,height=12cm]{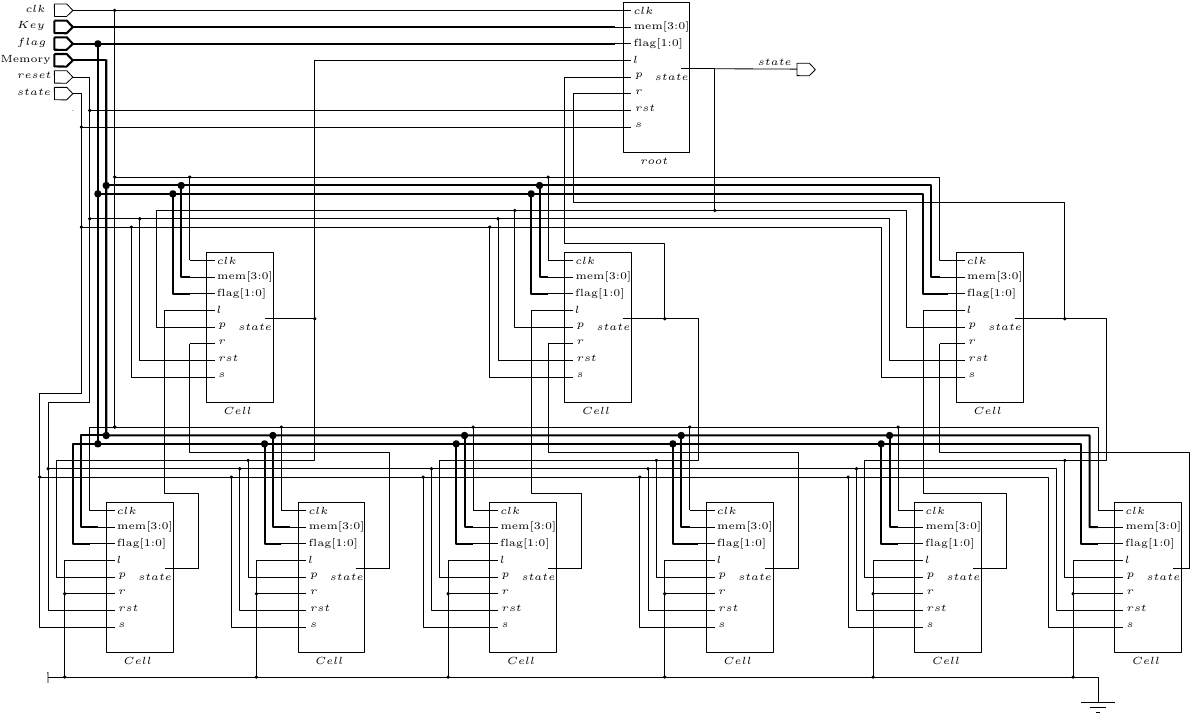}
			\end{adjustbox}\\
			a.\\
			\begin{adjustbox}{addcode={\begin{minipage}{\width}}{\end{minipage}},rotate=90}
				\includegraphics[width=4cm,height=12cm]{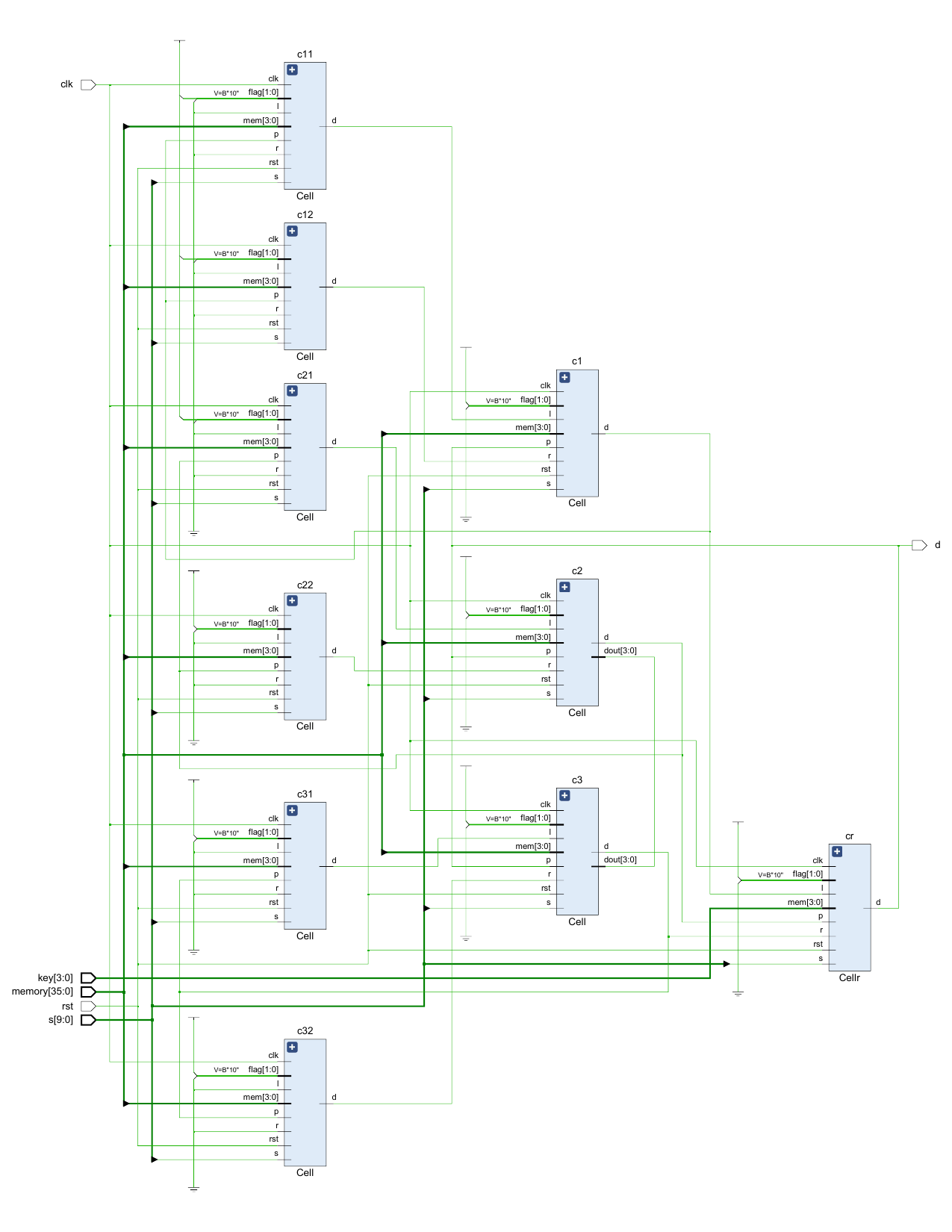}
			\end{adjustbox}
			\\
			b.\\
		\end{tabular}
}
	\caption{Schematic of the EMA platform; (a) RTL schematic, (b) Simulation in Vivado}
	\label{H1}
		\end{center}
	\end{figure}
	The RTL schematic of EMA platform, in FPGA, with {\em root} and other $9$ nodes is shown in Figure~\ref{H1} (Figure~\ref{H1}a. is the RTL schematic and Figure~\ref{H1}b. is the screenshot of the same RTL schematic simulation in Vivado). The structures of an intermediate node and the {\em root} node are shown in Figure~\ref{H2}.
	%
		%
\begin{figure*}[h! tbp]
\begin{center}
	\renewcommand{\arraystretch}{-5}
	\scalebox{1}{
		\begin{tabular}{cc}
			\begin{adjustbox}{addcode={\begin{minipage}{\width}}{\end{minipage}},rotate=90}
				\includegraphics[width=1.2\textwidth] {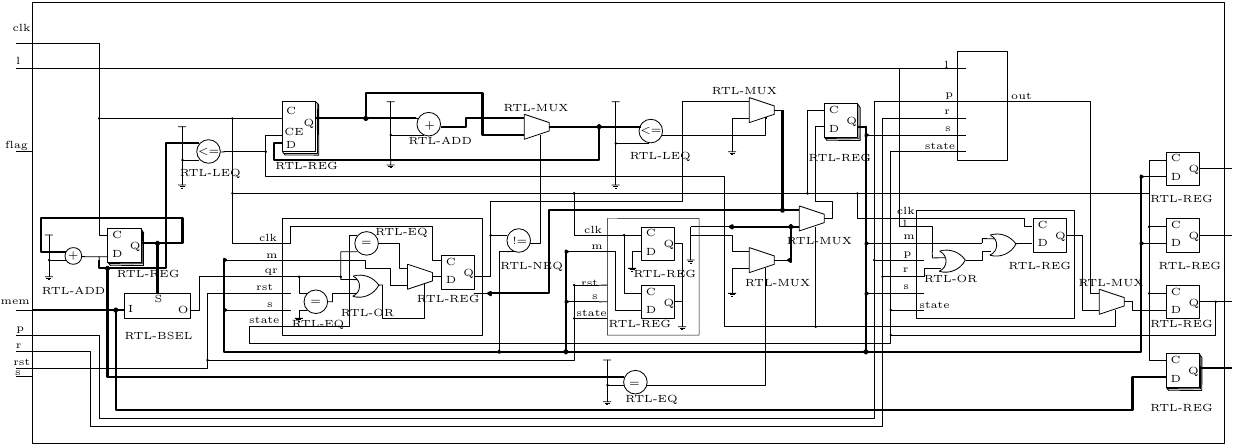}
			\end{adjustbox}&
			\begin{adjustbox}{addcode={\begin{minipage}{\width}}{\end{minipage}},rotate=90}
				\includegraphics[width=1.2\textwidth]{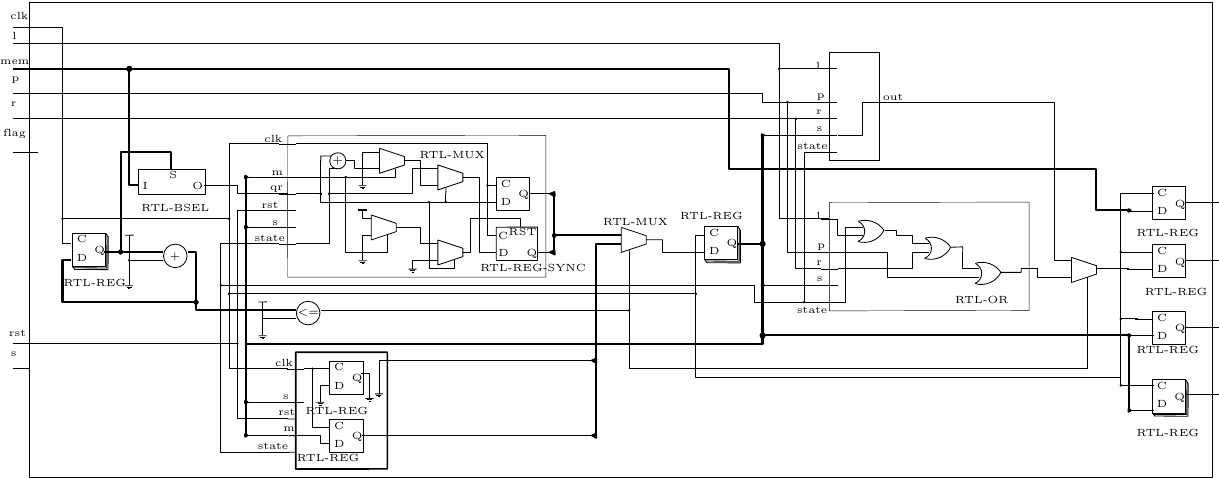}
			\end{adjustbox}
			\\
			a.&b.\\
	\end{tabular}}
	\caption{RTL schematic of EMA nodes; (a) Intermediate node, (b) {\em root} node. Ports $l,~p,$ and $r$ represent the states of left child, parent/middle node and right child. The $clk$ and $rst$ represent the clock and reset}
	
	\label{H2}
\end{center}
\end{figure*}
Table~\ref{Table:util2} and Table~\ref{Table:power2} report the post synthesis estimated utilization and power summary for $4$, $8$, and $16$-bit words, where each column ($2^{nd}$ to $4^{th}$ column) of the tables notes the results for different word size. The parameters chosen for experimentation in EMA based design are same as that of Section~\ref{DNMA}. As this design involves a huge number of nets ($4^{th}$ column of Table~\ref{Table:util2}), the simulation cannot be concluded for a word size greater than the $16$-bit with the Vivado's downloadable version.

%



\begin{table*}[ht]
	\centering
	\caption{Estimated utilization in EMA based design}	
	\begin{adjustbox}{width=0.6\textwidth,center}
		\begin{tabular}{|l|c|c|c|}
			\hline
			\textbf{\makecell{Estimated utilization}} & \multicolumn{1}{l|}{\textbf{$4$-bit}}                  & \multicolumn{1}{l|}{\textbf{$8$-bit}}                  & \multicolumn{1}{l|}{\textbf{$16$-bit}}                 \\ \hline
			BUFG                                                   & \begin{tabular}[c]{@{}c@{}}1\\ (3.13\%)\end{tabular}   & \begin{tabular}[c]{@{}c@{}}1\\ (3.13\%)\end{tabular}   & \begin{tabular}[c]{@{}c@{}}1\\ (3.13\%)\end{tabular}   \\ \hline
			I/O                                                    & \begin{tabular}[c]{@{}c@{}}42 \\ (14\%)\end{tabular}   & \begin{tabular}[c]{@{}c@{}}82\\ (27.3\%)\end{tabular}  & \begin{tabular}[c]{@{}c@{}}162\\ (54\%)\end{tabular}   \\ \hline
			FF                                                     & \begin{tabular}[c]{@{}c@{}}359\\ (0.45\%)\end{tabular} & \begin{tabular}[c]{@{}c@{}}359\\ (0.45\%)\end{tabular} & \begin{tabular}[c]{@{}c@{}}647\\ (0.82\%)\end{tabular} \\ \hline
			LUT                                                    & \begin{tabular}[c]{@{}c@{}}217\\ (0.53\%)\end{tabular} & \begin{tabular}[c]{@{}c@{}}227\\ (0.55\%)\end{tabular} & \begin{tabular}[c]{@{}c@{}}404\\ (0.99\%)\end{tabular} \\ \hline
			Fully routed nets                                      & 913                                                    & 962                                                    & 1788                                                   \\ \hline
		\end{tabular}
		
	\end{adjustbox}
	
	\label{Table:util2}
\end{table*}

\begin{table*}[ht]
\centering
\caption{Power summary for EMA based design}
\begin{adjustbox}{width=0.6\textwidth,center}		
	\begin{tabular}{|l|l|l|l|}
		\hline
		\textbf{\makecell{Power summary}}    & \textbf{$4$-bit}                                         & \textbf{$8$-bit} & \textbf{$16$-bit}\\ \hline
		
		Total on-chip power (W)      & 4.042                                & 5.076                               & 7.489                                \\ \hline
		PL static (W)               & 0.091                               & 0.094                               & 0.101                                \\ \hline
		I/O (W)                      & 1.603                               & 2.142                               & 1.655                                \\ \hline
		Logic (W)                    & 0.994                               & 1.138                               & 2.223                                \\ \hline
		Signals (W)                  & 1.354                               & 1.703                               & 3.509                                \\ \hline
		Junction temperature ($^oC$) & 32.6                                & 34.6                                & 39.1                                 \\ \hline
		Thermal margin ($^oC$)       & 52.4                                & 50.4                                & 45.9                                 \\ \hline
		Effective $\mathscr{I}JA$ ($^oC/W$)      & 1.9                                 & 1.9                                 & 1.9                                  \\ \hline
		
	\end{tabular}
\end{adjustbox}

\label{Table:power2}

\end{table*}

\subsection{Comparison of NMA and EMA platforms}

\begin{figure}[hbt!]
	\begin{center}
		\scalebox{1}{
			\begin{tabular}{cc}
				\includegraphics[width=0.55\textwidth]{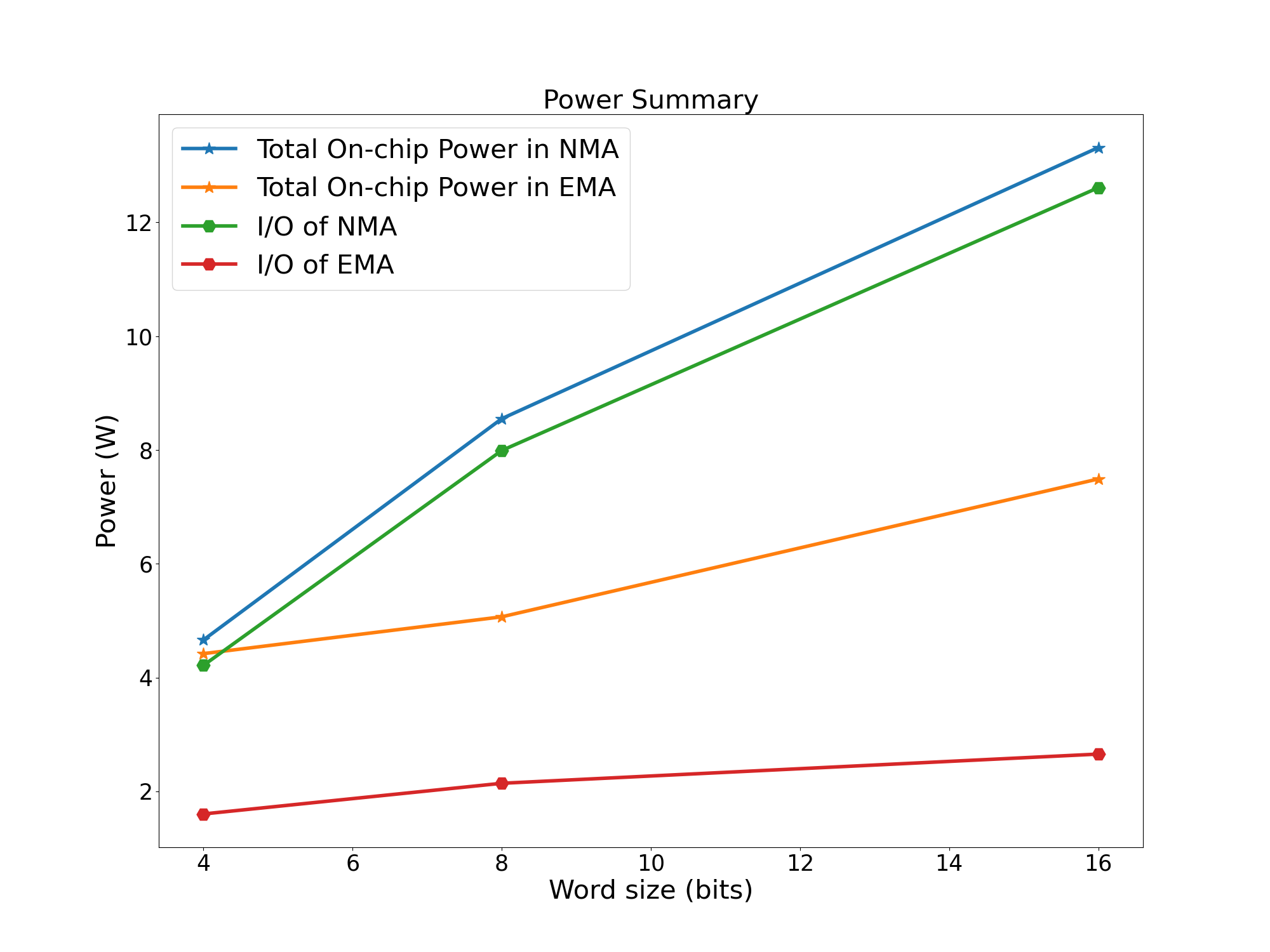}&\hspace{-3 em}
				\includegraphics[width=0.55\textwidth]{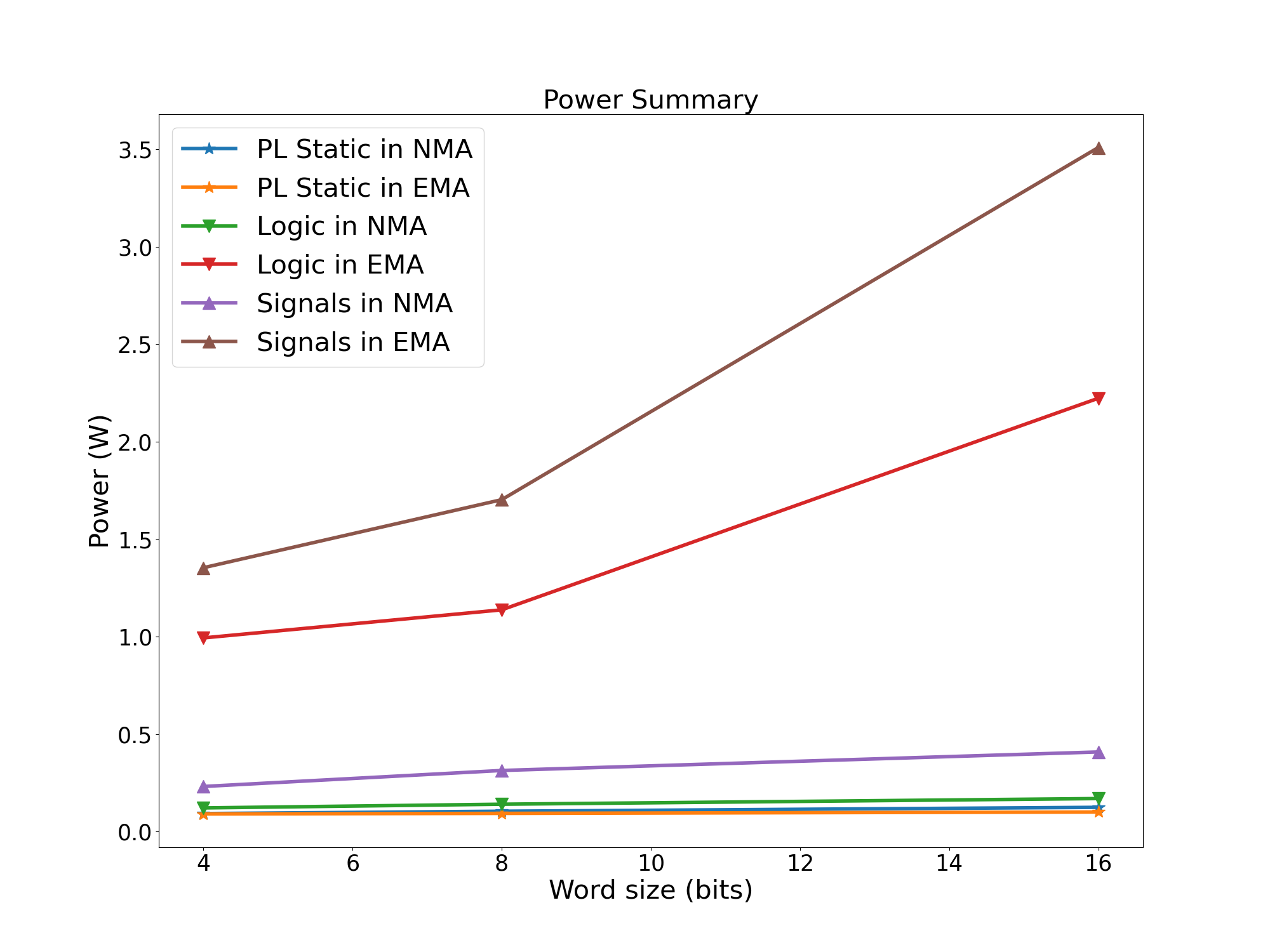}\\
				(a)&(b)\\
		\end{tabular}}
		
		\scalebox{1}{
			\begin{tabular}{ccc}
				&\includegraphics[width=0.95\textwidth]{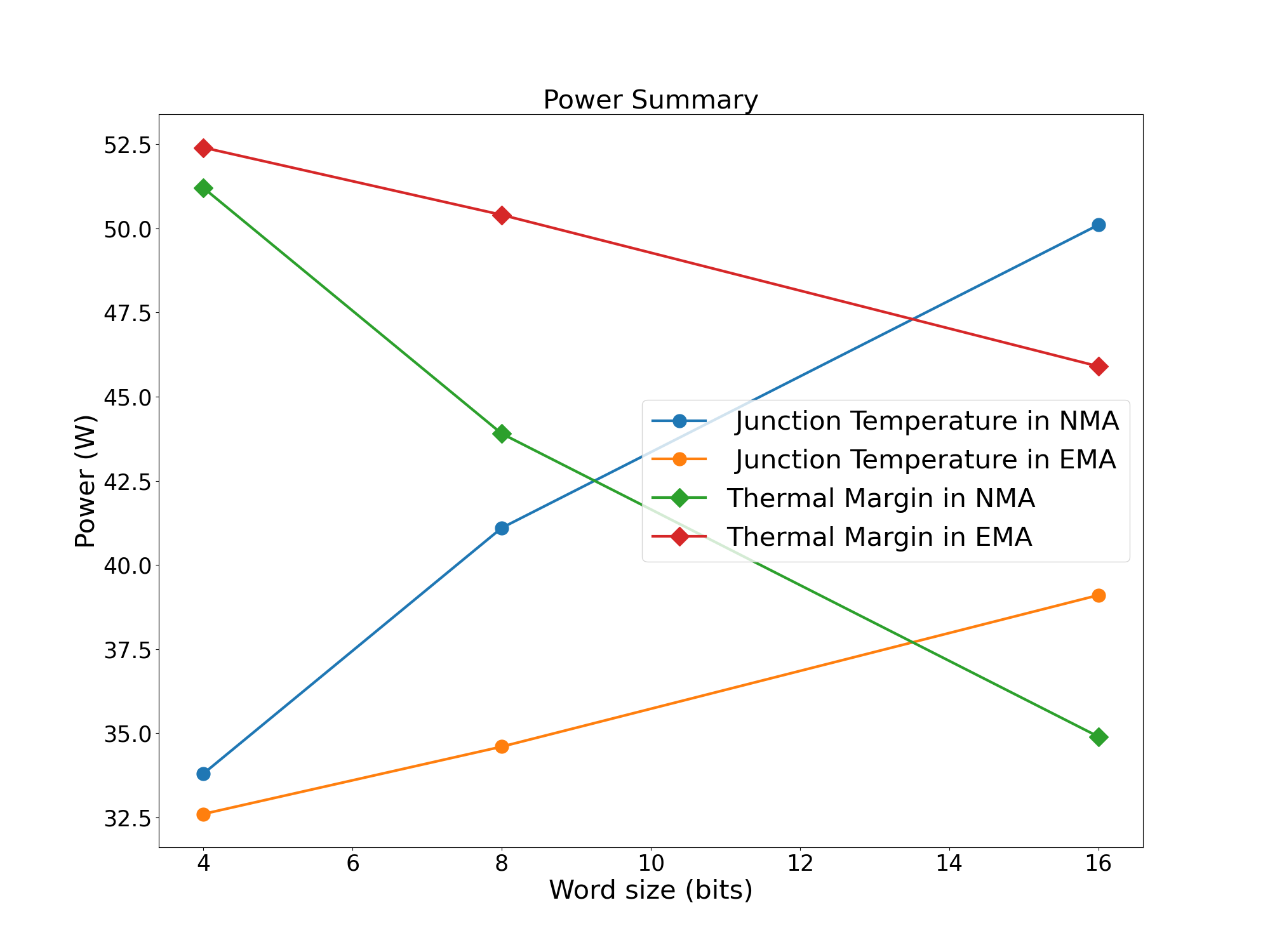}&\\
				&(c)&\\
		\end{tabular}}
		\caption{Comparison of power summary of NMA and EMA}
		\label{comgraph1}
	\end{center}
\end{figure}


The performances of both the IMC platforms (NMA and EMA based) are not limited to the scale of memory. 
In both the NMA and EMA schemes --
\begin{itemize}
\item A larger word size indicates higher activity, more transitions and logic components etc. These are the causes of higher power consumption. All these add additional dynamic power/junction leakage (this increases static power)/static power in programmable logic section and in I/O components/power consumption by the signals within a device.	However, the effective $ \mathscr{I}JA$ ($^oC/W$) is constant for both the schemes (Table~\ref{Table:power} and Table~\ref{Table:power2}).

\item The BUFG in the NMA and EMA platform are $1$ (Table~\ref{Table:util} and Table~\ref{Table:util2}).

\item The larger word size increases the requirement of FFs, LUTs and fully routed nets leading to more power (Table~\ref{Table:util} and Table~\ref{Table:util2}).
\end{itemize}    	

The comparison of power summary of the IMC platforms around NMA and EMA is shown in  Figure~\ref{comgraph1}. Figure~\ref{comgraph1}(a) shows the comparison of Total on-chip power and I/O power in NMA and EMA models for different word size.  Figure~\ref{comgraph1}(b) shows the details of PL static power required for Logic and Signals in NMA and EMA models. The comparison of Junction Temperature and Thermal Margin of NMA and EMA models are shown in Figure~\ref{comgraph1}(c).  
It can be observed from the figures that
\begin{itemize}
	\item The different power consumption in NMA based design -that is, PL static, I/O and junction temperature are more than that of the EMA based design. However, the power consumption due to logic, signals and thermal margin are less in NMA compared to EMA.
	
	\item The number of clocking and data toggling in NMA based design is also greater than that of the EMA based design. It increases the total on-chip power of NMA based design.
	
\end{itemize}

The comparison of estimated utilization in the designs is shown in Figure~\ref{comgraph2}. 
It shows that the EMA based design requires more resources -that is, I/Os, FFs, LUTs and fully routed nets than that of NMA based design.

\begin{figure*}[ht]
\begin{center}
	\scalebox{1}{
		\begin{tabular}{c}
			\includegraphics[width=1.01\textwidth]{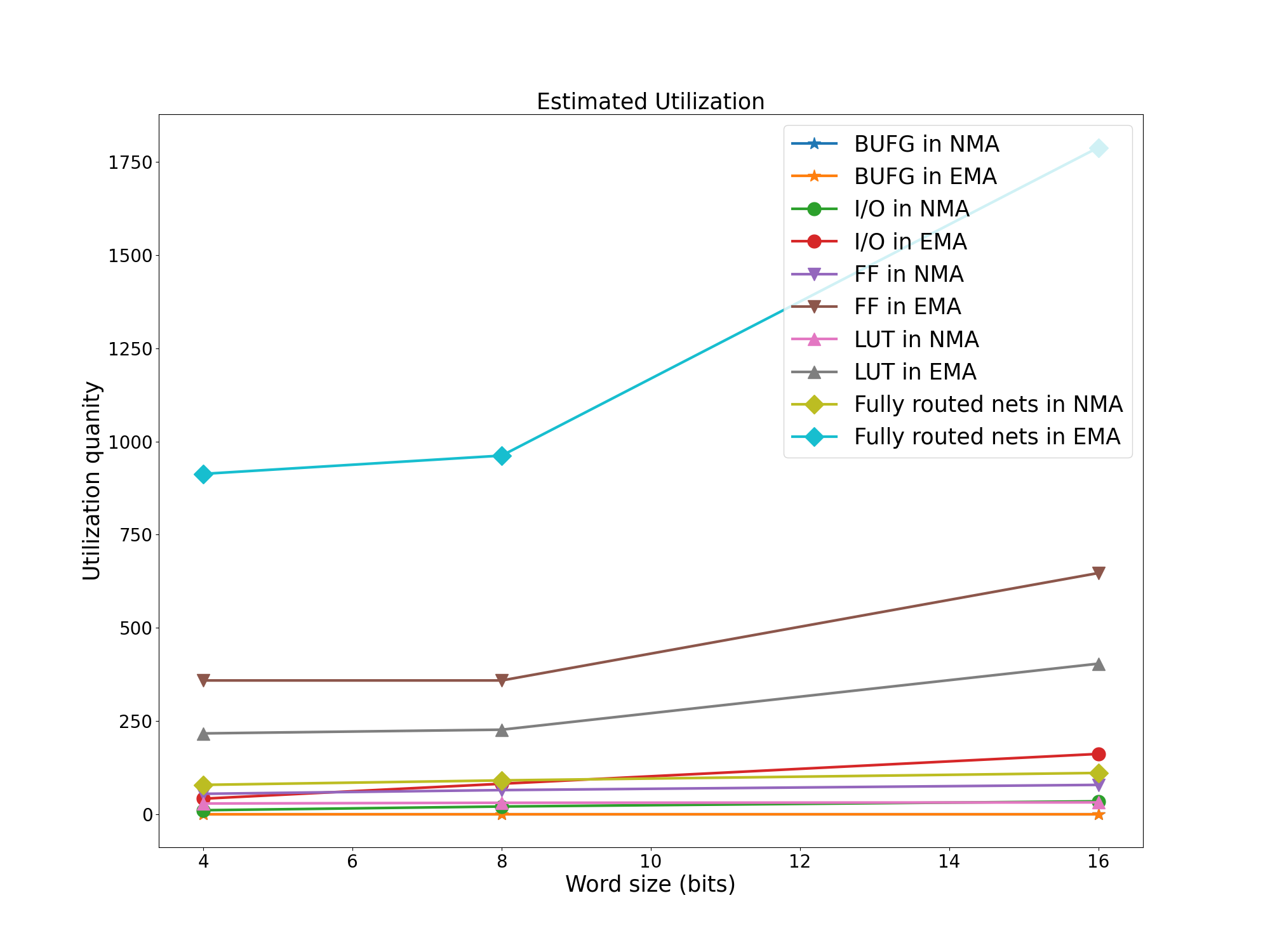}\\
		\end{tabular}}
		\caption{Comparison of NMA and EMA based IMC on estimated utilization}
		\label{comgraph2}
	\end{center}
\end{figure*} 

\subsection{Comparison of proposed IMC with state-of-the art designs}\label{SOA_comp}

In this work, we propose an IMC architecture and have shown how sorting, searching, and computation of max (min) can be done in-memory. 
The Content Addressable Memory (CAM) can be the best choice for {\em in-memory searching}. It completes the comparison of a key with all the elements of a list in one unit of time but at the cost of a dedicated comparison logic in memory. 

The proposed NMA based design is compared with the design (searching) reported in~\cite{ullah2012fpga}. It is a $512\times36$ SRAM based Ternary Content Addressable Memory (SR-TCAM). The SR-TCAM achieves comparable searching operation in two clock cycles and targets large capacity TCAMs. The four design options Design 1, Design 2, Design 1 with PE and Design 2 with PE are reported in~\cite{ullah2012fpga}. The NMA based design is compared with these four designs in terms of the required FF and LUTs. Figure~\ref{H11} illustrates the comparison results, where the dimensions of SR-TCAM and NMA are considered to be the same --that is, $512\times36$. Different color bars indicate different models. It can be observed that the NMA based design requires more FFs than the other models, because the additional three flags are taken into account by each node in NMA based design. However, the requirement of LUTs in NMA is very less compared to the other models.

Since the effective realization of {\em in-memory sorting} was not available in the literature and, therefore, cannot be compared with the proposed {\em in-memory sorting} scheme.


\begin{figure}[ht]
	\begin{center}
		\scalebox{1}{
			\begin{tabular}{c}
				\includegraphics[width=0.7\columnwidth]{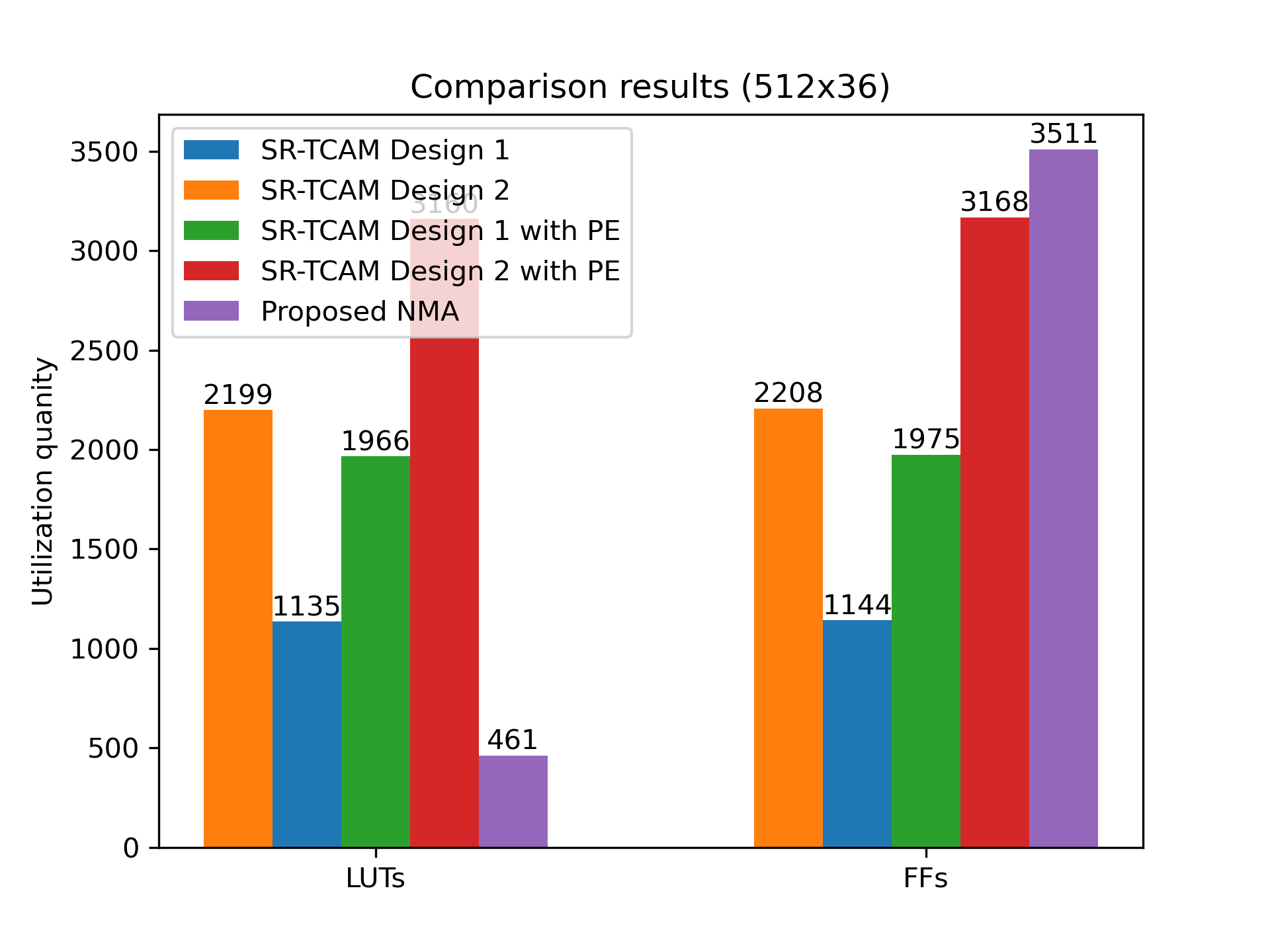}
		\end{tabular}}
		\caption{Comparison of SR-TCAM~\cite{ullah2012fpga} and NMA based design of size $512\times36$}
		\label{H11}
	\end{center}
	
\end{figure}

\section{Conclusion}\label{conclusion}
This work proposes an effective solution for computational problems -that is, {\em in-memory searching}, {\em computing max (min) in-memory} and {\em in-memory sorting}. The proposed {\em in-memory sorting} follows the {\em in-memory searching} and {\em computing max (min)}. The in-memory computing (IMC) platform has been developed around the Cayley tree. Each node of the tree is associated with a word of memory, and is driven by a function. The IMC platform avoids data movements between the memory and CPU and, therefore, very low bandwidth is required for communication. The comparison logic is the only overhead in memory.

For the IMC, two type of designs have been proposed, one is around a new memory architecture (NMA) and the other one is based on the existing memory architecture (EMA). The IMC platform has shown massive parallelism in computation that enables {\em sorting} in $\mathcal{O}(n\log{n})$, {\em computing max (min) in-memory} in $\mathcal{O}(\log{n})$ and  {\em searching} in $\mathcal{O}(\log{n})$ time, even for the unsorted input list. The space complexity of these scheme is $\mathcal{O}(n)$. The efficacy of the proposed IMC platform is established while compared with the State-of-the art solutions.

\bibliographystyle{els}
\bibliography{Cayleysearch}

\begin{thebibliography}{36}
\expandafter\ifx\csname natexlab\endcsname\relax\def\natexlab#1{#1}\fi
\providecommand{\url}[1]{\texttt{#1}}
\providecommand{\href}[2]{#2}
\providecommand{\path}[1]{#1}
\providecommand{\DOIprefix}{doi:}
\providecommand{\ArXivprefix}{arXiv:}
\providecommand{\URLprefix}{URL: }
\providecommand{\Pubmedprefix}{pmid:}
\providecommand{\doi}[1]{\href{http://dx.doi.org/#1}{\path{#1}}}
\providecommand{\Pubmed}[1]{\href{pmid:#1}{\path{#1}}}
\providecommand{\bibinfo}[2]{#2}
\ifx\xfnm\relax \def\xfnm[#1]{\unskip,\space#1}\fi
\bibitem[{Sebastian et~al.(2020)Sebastian, Le~Gallo, Khaddam-Aljameh, and
  Eleftheriou}]{sebastian2020memory}
\bibinfo{author}{A.~Sebastian}, \bibinfo{author}{M.~Le~Gallo},
  \bibinfo{author}{R.~Khaddam-Aljameh}, \bibinfo{author}{E.~Eleftheriou},
\newblock \bibinfo{title}{Memory devices and applications for in-memory
  computing},
\newblock \bibinfo{journal}{Nature nanotechnology} \bibinfo{volume}{15}
  (\bibinfo{year}{2020}) \bibinfo{pages}{529--544}.
\bibitem[{Hashemi et~al.(2016)Hashemi, Mutlu, and Patt}]{hashemi2016continuous}
\bibinfo{author}{M.~Hashemi}, \bibinfo{author}{O.~Mutlu},
  \bibinfo{author}{Y.~N. Patt},
\newblock \bibinfo{title}{Continuous runahead: Transparent hardware
  acceleration for memory intensive workloads},
\newblock in: \bibinfo{booktitle}{2016 49th Annual IEEE/ACM International
  Symposium on Microarchitecture (MICRO)}, \bibinfo{organization}{IEEE},
  \bibinfo{year}{2016}, pp. \bibinfo{pages}{1--12}.
\bibitem[{Mutlu et~al.(2019)Mutlu, Ghose, G{\'o}mez-Luna, and
  Ausavarungnirun}]{mutlu2019processing}
\bibinfo{author}{O.~Mutlu}, \bibinfo{author}{S.~Ghose},
  \bibinfo{author}{J.~G{\'o}mez-Luna}, \bibinfo{author}{R.~Ausavarungnirun},
\newblock \bibinfo{title}{Processing data where it makes sense: Enabling
  in-memory computation},
\newblock \bibinfo{journal}{Microprocessors and Microsystems}
  \bibinfo{volume}{67} (\bibinfo{year}{2019}) \bibinfo{pages}{28--41}.
\bibitem[{Yang et~al.(2022)Yang, He, Zhang, Lu, Li, Wang, Wang, and
  Zhao}]{yang2022novel}
\bibinfo{author}{Z.~Yang}, \bibinfo{author}{K.~He}, \bibinfo{author}{Z.~Zhang},
  \bibinfo{author}{Y.~Lu}, \bibinfo{author}{Z.~Li}, \bibinfo{author}{Y.~Wang},
  \bibinfo{author}{Z.~Wang}, \bibinfo{author}{W.~Zhao},
\newblock \bibinfo{title}{A novel computing-in-memory platform based on hybrid
  spintronic/cmos memory},
\newblock \bibinfo{journal}{IEEE Transactions on Electron Devices}
  \bibinfo{volume}{69} (\bibinfo{year}{2022}) \bibinfo{pages}{1698--1705}.
\bibitem[{Wong and Salahuddin(2015)}]{wong2015memory}
\bibinfo{author}{H.-S.~P. Wong}, \bibinfo{author}{S.~Salahuddin},
\newblock \bibinfo{title}{Memory leads the way to better computing},
\newblock \bibinfo{journal}{Nature nanotechnology} \bibinfo{volume}{10}
  (\bibinfo{year}{2015}) \bibinfo{pages}{191--194}.
\bibitem[{Agrawal et~al.(2018)Agrawal, Jaiswal, Lee, and Roy}]{agrawal2018}
\bibinfo{author}{A.~Agrawal}, \bibinfo{author}{A.~Jaiswal},
  \bibinfo{author}{C.~Lee}, \bibinfo{author}{K.~Roy},
\newblock \bibinfo{title}{X-sram: Enabling in-memory boolean computations in
  cmos static random access memories},
\newblock \bibinfo{journal}{IEEE Transactions on Circuits and Systems I:
  Regular Papers} \bibinfo{volume}{65} (\bibinfo{year}{2018})
  \bibinfo{pages}{4219--4232}.
\bibitem[{Ali et~al.(2019)Ali, Jaiswal, and Roy}]{ali2019}
\bibinfo{author}{M.~F. Ali}, \bibinfo{author}{A.~Jaiswal},
  \bibinfo{author}{K.~Roy},
\newblock \bibinfo{title}{In-memory low-cost bit-serial addition using
  commodity dram technology},
\newblock \bibinfo{journal}{IEEE Transactions on Circuits and Systems I:
  Regular Papers} \bibinfo{volume}{67} (\bibinfo{year}{2019})
  \bibinfo{pages}{155--165}.
\bibitem[{Sandhie et~al.(2021)Sandhie, Patel, Ahmed, and
  Chowdhury}]{sandhie2021investigation}
\bibinfo{author}{Z.~T. Sandhie}, \bibinfo{author}{J.~A. Patel},
  \bibinfo{author}{F.~U. Ahmed}, \bibinfo{author}{M.~H. Chowdhury},
\newblock \bibinfo{title}{Investigation of multiple-valued logic technologies
  for beyond-binary era},
\newblock \bibinfo{journal}{ACM Computing Surveys (CSUR)} \bibinfo{volume}{54}
  (\bibinfo{year}{2021}) \bibinfo{pages}{1--30}.
\bibitem[{Li et~al.(2022)Li, Bai, Xu, Zhang, Wu, Cai, Pan, and
  Zhao}]{li2022survey}
\bibinfo{author}{Y.~Li}, \bibinfo{author}{T.~Bai}, \bibinfo{author}{X.~Xu},
  \bibinfo{author}{Y.~Zhang}, \bibinfo{author}{B.~Wu},
  \bibinfo{author}{H.~Cai}, \bibinfo{author}{B.~Pan},
  \bibinfo{author}{W.~Zhao},
\newblock \bibinfo{title}{A survey of mram-centric computing: From near memory
  to in memory},
\newblock \bibinfo{journal}{IEEE Transactions on Emerging Topics in Computing}
  (\bibinfo{year}{2022}).
\bibitem[{Huang et~al.(2016)Huang, Kang, Zhao, Chen, Han, Zhou, Chen, Ma, Li,
  Liu et~al.}]{huang2016reconfigurable}
\bibinfo{author}{P.~Huang}, \bibinfo{author}{J.~Kang},
  \bibinfo{author}{Y.~Zhao}, \bibinfo{author}{S.~Chen},
  \bibinfo{author}{R.~Han}, \bibinfo{author}{Z.~Zhou},
  \bibinfo{author}{Z.~Chen}, \bibinfo{author}{W.~Ma}, \bibinfo{author}{M.~Li},
  \bibinfo{author}{L.~Liu}, et~al.,
\newblock \bibinfo{title}{Reconfigurable nonvolatile logic operations in
  resistance switching crossbar array for large-scale circuits},
\newblock \bibinfo{journal}{Advanced Materials} \bibinfo{volume}{28}
  (\bibinfo{year}{2016}) \bibinfo{pages}{9758--9764}.
\bibitem[{Jain et~al.(2017)Jain, Ranjan, Roy, and
  Raghunathan}]{jain2017computing}
\bibinfo{author}{S.~Jain}, \bibinfo{author}{A.~Ranjan},
  \bibinfo{author}{K.~Roy}, \bibinfo{author}{A.~Raghunathan},
\newblock \bibinfo{title}{Computing in memory with spin-transfer torque
  magnetic ram},
\newblock \bibinfo{journal}{IEEE Transactions on Very Large Scale Integration
  (VLSI) Systems} \bibinfo{volume}{26} (\bibinfo{year}{2017})
  \bibinfo{pages}{470--483}.
\bibitem[{Kang et~al.(2017)Kang, Wang, Wang, Zhang, and Zhao}]{kang2017memory}
\bibinfo{author}{W.~Kang}, \bibinfo{author}{H.~Wang},
  \bibinfo{author}{Z.~Wang}, \bibinfo{author}{Y.~Zhang},
  \bibinfo{author}{W.~Zhao},
\newblock \bibinfo{title}{In-memory processing paradigm for bitwise logic
  operations in stt--mram},
\newblock \bibinfo{journal}{IEEE Transactions on Magnetics}
  \bibinfo{volume}{53} (\bibinfo{year}{2017}) \bibinfo{pages}{1--4}.
\bibitem[{Le~Gallo et~al.(2018)Le~Gallo, Sebastian, Cherubini, Giefers, and
  Eleftheriou}]{le2018compressed}
\bibinfo{author}{M.~Le~Gallo}, \bibinfo{author}{A.~Sebastian},
  \bibinfo{author}{G.~Cherubini}, \bibinfo{author}{H.~Giefers},
  \bibinfo{author}{E.~Eleftheriou},
\newblock \bibinfo{title}{Compressed sensing with approximate message passing
  using in-memory computing},
\newblock \bibinfo{journal}{IEEE Transactions on Electron Devices}
  \bibinfo{volume}{65} (\bibinfo{year}{2018}) \bibinfo{pages}{4304--4312}.
\bibitem[{Zhang et~al.(2019)Zhang, Kang, Cao, Wu, Zhang, and
  Zhao}]{zhang2019spintronic}
\bibinfo{author}{H.~Zhang}, \bibinfo{author}{W.~Kang},
  \bibinfo{author}{K.~Cao}, \bibinfo{author}{B.~Wu},
  \bibinfo{author}{Y.~Zhang}, \bibinfo{author}{W.~Zhao},
\newblock \bibinfo{title}{Spintronic processing unit in spin transfer torque
  magnetic random access memory},
\newblock \bibinfo{journal}{IEEE Transactions on Electron Devices}
  \bibinfo{volume}{66} (\bibinfo{year}{2019}) \bibinfo{pages}{2017--2022}.
\bibitem[{Shreya et~al.(2020)Shreya, Jain, and Kaushik}]{shreya2020computing}
\bibinfo{author}{S.~Shreya}, \bibinfo{author}{A.~Jain}, \bibinfo{author}{B.~K.
  Kaushik},
\newblock \bibinfo{title}{Computing-in-memory architecture using
  energy-efficient multilevel voltage-controlled spin-orbit torque device},
\newblock \bibinfo{journal}{IEEE Transactions on Electron Devices}
  \bibinfo{volume}{67} (\bibinfo{year}{2020}) \bibinfo{pages}{1972--1979}.
\bibitem[{Wang et~al.(2020)Wang, Wang, Wang, Zhang, and Zhao}]{wang2020design}
\bibinfo{author}{C.~Wang}, \bibinfo{author}{Z.~Wang},
  \bibinfo{author}{G.~Wang}, \bibinfo{author}{Y.~Zhang},
  \bibinfo{author}{W.~Zhao},
\newblock \bibinfo{title}{Design of an area-efficient computing in memory
  platform based on stt-mram},
\newblock \bibinfo{journal}{IEEE Transactions on Magnetics}
  \bibinfo{volume}{57} (\bibinfo{year}{2020}) \bibinfo{pages}{1--4}.
\bibitem[{Cayley(1878)}]{cayley1878desiderata}
\bibinfo{author}{P.~Cayley},
\newblock \bibinfo{title}{Desiderata and suggestions: No. 2. the theory of
  groups: graphical representation},
\newblock \bibinfo{journal}{American journal of mathematics}
  \bibinfo{volume}{1} (\bibinfo{year}{1878}) \bibinfo{pages}{174--176}.
\bibitem[{Godfrey(1993)}]{godfrey1993introduction}
\bibinfo{author}{M.~D. Godfrey},
\newblock \bibinfo{title}{Introduction to “the first draft report on the
  edvac” by john von neumann},
\newblock \bibinfo{journal}{Annals of the History of Computing}
  \bibinfo{volume}{15} (\bibinfo{year}{1993}) \bibinfo{pages}{11--21}.
\bibitem[{Asifuzzaman et~al.(2023)Asifuzzaman, Miniskar, Young, Liu, and
  Vetter}]{asifuzzaman2023survey}
\bibinfo{author}{K.~Asifuzzaman}, \bibinfo{author}{N.~R. Miniskar},
  \bibinfo{author}{A.~R. Young}, \bibinfo{author}{F.~Liu},
  \bibinfo{author}{J.~S. Vetter},
\newblock \bibinfo{title}{A survey on processing-in-memory techniques: Advances
  and challenges},
\newblock \bibinfo{journal}{Memories-Materials, Devices, Circuits and Systems}
  \bibinfo{volume}{4} (\bibinfo{year}{2023}) \bibinfo{pages}{100022}.
\bibitem[{Bao et~al.(2022)Bao, Zhou, Li, Pei, Tian, Yang, Ren, Tong, Li, He
  et~al.}]{bao2022toward}
\bibinfo{author}{H.~Bao}, \bibinfo{author}{H.~Zhou}, \bibinfo{author}{J.~Li},
  \bibinfo{author}{H.~Pei}, \bibinfo{author}{J.~Tian},
  \bibinfo{author}{L.~Yang}, \bibinfo{author}{S.~Ren},
  \bibinfo{author}{S.~Tong}, \bibinfo{author}{Y.~Li}, \bibinfo{author}{Y.~He},
  et~al.,
\newblock \bibinfo{title}{Toward memristive in-memory computing: principles and
  applications},
\newblock \bibinfo{journal}{Frontiers of Optoelectronics} \bibinfo{volume}{15}
  (\bibinfo{year}{2022}) \bibinfo{pages}{23}.
\bibitem[{Sun et~al.(2023)Sun, Yue, He, Huang, Wang, Jia, Li, Lei, Jia, and
  Liu}]{sun2023survey}
\bibinfo{author}{W.~Sun}, \bibinfo{author}{J.~Yue}, \bibinfo{author}{Y.~He},
  \bibinfo{author}{Z.~Huang}, \bibinfo{author}{J.~Wang},
  \bibinfo{author}{W.~Jia}, \bibinfo{author}{Y.~Li}, \bibinfo{author}{L.~Lei},
  \bibinfo{author}{H.~Jia}, \bibinfo{author}{Y.~Liu},
\newblock \bibinfo{title}{A survey of computing-in-memory processor: From
  circuit to application},
\newblock \bibinfo{journal}{IEEE Open Journal of the Solid-State Circuits
  Society}  (\bibinfo{year}{2023}).
\bibitem[{Wang and Harms(2015)}]{wang2015general}
\bibinfo{author}{J.-P. Wang}, \bibinfo{author}{J.~D. Harms},
  \bibinfo{title}{General structure for computational random access memory
  (cram)}, \bibinfo{year}{2015}. \bibinfo{note}{US Patent 9,224,447}.
\bibitem[{Nai et~al.(2017)Nai, Hadidi, Sim, Kim, Kumar, and
  Kim}]{nai2017graphpim}
\bibinfo{author}{L.~Nai}, \bibinfo{author}{R.~Hadidi},
  \bibinfo{author}{J.~Sim}, \bibinfo{author}{H.~Kim},
  \bibinfo{author}{P.~Kumar}, \bibinfo{author}{H.~Kim},
\newblock \bibinfo{title}{Graphpim: Enabling instruction-level pim offloading
  in graph computing frameworks},
\newblock in: \bibinfo{booktitle}{2017 IEEE International symposium on high
  performance computer architecture (HPCA)}, \bibinfo{organization}{IEEE},
  \bibinfo{year}{2017}, pp. \bibinfo{pages}{457--468}.
\bibitem[{Guo et~al.(2021)Guo, Yin, Bai, Zhu, Shi, Wang, Cao, and
  Zhao}]{guo2021spintronics}
\bibinfo{author}{Z.~Guo}, \bibinfo{author}{J.~Yin}, \bibinfo{author}{Y.~Bai},
  \bibinfo{author}{D.~Zhu}, \bibinfo{author}{K.~Shi},
  \bibinfo{author}{G.~Wang}, \bibinfo{author}{K.~Cao},
  \bibinfo{author}{W.~Zhao},
\newblock \bibinfo{title}{Spintronics for energy-efficient computing: An
  overview and outlook},
\newblock \bibinfo{journal}{Proceedings of the IEEE} \bibinfo{volume}{109}
  (\bibinfo{year}{2021}) \bibinfo{pages}{1398--1417}.
\bibitem[{Wang et~al.(2020)Wang, Yang, Zhao, Qi, Liu, Cheng, Jia, Chen, Qu, and
  Zhao}]{wang2020tcim}
\bibinfo{author}{X.~Wang}, \bibinfo{author}{J.~Yang},
  \bibinfo{author}{Y.~Zhao}, \bibinfo{author}{Y.~Qi}, \bibinfo{author}{M.~Liu},
  \bibinfo{author}{X.~Cheng}, \bibinfo{author}{X.~Jia},
  \bibinfo{author}{X.~Chen}, \bibinfo{author}{G.~Qu},
  \bibinfo{author}{W.~Zhao},
\newblock \bibinfo{title}{Tcim: triangle counting acceleration with
  processing-in-mram architecture},
\newblock in: \bibinfo{booktitle}{2020 57th ACM/IEEE Design Automation
  Conference (DAC)}, \bibinfo{organization}{IEEE}, \bibinfo{year}{2020}, pp.
  \bibinfo{pages}{1--6}.
\bibitem[{Borghetti et~al.(2010)Borghetti, Snider, Kuekes, Yang, Stewart, and
  Williams}]{borghetti2010memristive}
\bibinfo{author}{J.~Borghetti}, \bibinfo{author}{G.~S. Snider},
  \bibinfo{author}{P.~J. Kuekes}, \bibinfo{author}{J.~J. Yang},
  \bibinfo{author}{D.~R. Stewart}, \bibinfo{author}{R.~S. Williams},
\newblock \bibinfo{title}{‘memristive’switches enable ‘stateful’logic
  operations via material implication},
\newblock \bibinfo{journal}{Nature} \bibinfo{volume}{464}
  (\bibinfo{year}{2010}) \bibinfo{pages}{873--876}.
\bibitem[{Kvatinsky et~al.(2013)Kvatinsky, Satat, Wald, Friedman, Kolodny, and
  Weiser}]{kvatinsky2013memristor}
\bibinfo{author}{S.~Kvatinsky}, \bibinfo{author}{G.~Satat},
  \bibinfo{author}{N.~Wald}, \bibinfo{author}{E.~G. Friedman},
  \bibinfo{author}{A.~Kolodny}, \bibinfo{author}{U.~C. Weiser},
\newblock \bibinfo{title}{Memristor-based material implication (imply) logic:
  Design principles and methodologies},
\newblock \bibinfo{journal}{IEEE Transactions on Very Large Scale Integration
  (VLSI) Systems} \bibinfo{volume}{22} (\bibinfo{year}{2013})
  \bibinfo{pages}{2054--2066}.
\bibitem[{Kvatinsky et~al.(2014)Kvatinsky, Belousov, Liman, Satat, Wald,
  Friedman, Kolodny, and Weiser}]{kvatinsky2014magic}
\bibinfo{author}{S.~Kvatinsky}, \bibinfo{author}{D.~Belousov},
  \bibinfo{author}{S.~Liman}, \bibinfo{author}{G.~Satat},
  \bibinfo{author}{N.~Wald}, \bibinfo{author}{E.~G. Friedman},
  \bibinfo{author}{A.~Kolodny}, \bibinfo{author}{U.~C. Weiser},
\newblock \bibinfo{title}{Magic—memristor-aided logic},
\newblock \bibinfo{journal}{IEEE Transactions on Circuits and Systems II:
  Express Briefs} \bibinfo{volume}{61} (\bibinfo{year}{2014})
  \bibinfo{pages}{895--899}.
\bibitem[{Shirinzadeh et~al.(2017)Shirinzadeh, Soeken, Gaillardon, and
  Drechsler}]{shirinzadeh2017logic}
\bibinfo{author}{S.~Shirinzadeh}, \bibinfo{author}{M.~Soeken},
  \bibinfo{author}{P.-E. Gaillardon}, \bibinfo{author}{R.~Drechsler},
\newblock \bibinfo{title}{Logic synthesis for rram-based in-memory computing},
\newblock \bibinfo{journal}{IEEE Transactions on Computer-Aided Design of
  Integrated Circuits and Systems} \bibinfo{volume}{37} (\bibinfo{year}{2017})
  \bibinfo{pages}{1422--1435}.
\bibitem[{Batcher(1968)}]{batcher1968sorting}
\bibinfo{author}{K.~E. Batcher},
\newblock \bibinfo{title}{Sorting networks and their applications},
\newblock in: \bibinfo{booktitle}{Proceedings of the April 30--May 2, 1968,
  spring joint computer conference}, \bibinfo{year}{1968}, pp.
  \bibinfo{pages}{307--314}.
\bibitem[{Peters et~al.(2011)Peters, Schulz-Hildebrandt, and
  Luttenberger}]{peters2011fast}
\bibinfo{author}{H.~Peters}, \bibinfo{author}{O.~Schulz-Hildebrandt},
  \bibinfo{author}{N.~Luttenberger},
\newblock \bibinfo{title}{Fast in-place, comparison-based sorting with cuda: a
  study with bitonic sort},
\newblock \bibinfo{journal}{Concurrency and Computation: Practice and
  Experience} \bibinfo{volume}{23} (\bibinfo{year}{2011})
  \bibinfo{pages}{681--693}.
\bibitem[{Peters et~al.(2012)Peters, Schulz-Hildebrandt, and
  Luttenberger}]{peters2012novel}
\bibinfo{author}{H.~Peters}, \bibinfo{author}{O.~Schulz-Hildebrandt},
  \bibinfo{author}{N.~Luttenberger},
\newblock \bibinfo{title}{A novel sorting algorithm for many-core architectures
  based on adaptive bitonic sort},
\newblock in: \bibinfo{booktitle}{2012 IEEE 26th International Parallel and
  Distributed Processing Symposium}, \bibinfo{organization}{IEEE},
  \bibinfo{year}{2012}, pp. \bibinfo{pages}{227--237}.
\bibitem[{Shim and Yu(2022)}]{shim2022gp3d}
\bibinfo{author}{W.~Shim}, \bibinfo{author}{S.~Yu},
\newblock \bibinfo{title}{Gp3d: 3d nand based in-memory graph processing
  accelerator},
\newblock \bibinfo{journal}{IEEE Journal on Emerging and Selected Topics in
  Circuits and Systems} \bibinfo{volume}{12} (\bibinfo{year}{2022})
  \bibinfo{pages}{500--507}.
\bibitem[{Chen et~al.(2022)Chen, Wang, Jia, Yang, Qu, and
  Zhao}]{chen2022accelerating}
\bibinfo{author}{X.~Chen}, \bibinfo{author}{X.~Wang}, \bibinfo{author}{X.~Jia},
  \bibinfo{author}{J.~Yang}, \bibinfo{author}{G.~Qu},
  \bibinfo{author}{W.~Zhao},
\newblock \bibinfo{title}{Accelerating graph-connected component computation
  with emerging processing-in-memory architecture},
\newblock \bibinfo{journal}{IEEE Transactions on Computer-Aided Design of
  Integrated Circuits and Systems} \bibinfo{volume}{41} (\bibinfo{year}{2022})
  \bibinfo{pages}{5333--5342}.
\bibitem[{Wei et~al.(2023)Wei, Wang, Zhang, Yang, Jia, Wang, Qu, and
  Zhao}]{wei2023imga}
\bibinfo{author}{Y.~Wei}, \bibinfo{author}{X.~Wang},
  \bibinfo{author}{S.~Zhang}, \bibinfo{author}{J.~Yang},
  \bibinfo{author}{X.~Jia}, \bibinfo{author}{Z.~Wang}, \bibinfo{author}{G.~Qu},
  \bibinfo{author}{W.~Zhao},
\newblock \bibinfo{title}{Imga: Efficient in-memory graph convolution network
  aggregation with data flow optimizations},
\newblock \bibinfo{journal}{IEEE Transactions on Computer-Aided Design of
  Integrated Circuits and Systems}  (\bibinfo{year}{2023}).
\bibitem[{Ullah et~al.(2012)Ullah, Jaiswal, Chan, and Cheung}]{ullah2012fpga}
\bibinfo{author}{Z.~Ullah}, \bibinfo{author}{M.~K. Jaiswal},
  \bibinfo{author}{Y.~Chan}, \bibinfo{author}{R.~C. Cheung},
\newblock \bibinfo{title}{Fpga implementation of sram-based ternary content
  addressable memory},
\newblock in: \bibinfo{booktitle}{2012 IEEE 26th International Parallel and
  Distributed Processing Symposium Workshops \& PhD Forum},
  \bibinfo{organization}{IEEE}, \bibinfo{year}{2012}, pp.
  \bibinfo{pages}{383--389}.

\end{thebibliography}

\end{document}